\newtheorem{thm}{Theorem}
\newtheorem{eg}{Example}
\theoremstyle{definition}
\newcommand{\ep}{\varepsilon}
\newcommand{\dd}{\mathrm{d}}
\newcommand{\E}{\mathbb{E}}
\title{On default priors for robust Bayesian estimation with divergences}
\author[1]{Tomoyuki Nakagawa\footnote{The corresponding author. }}
\author[2]{Shintaro Hashimoto}
\affil[1]{Department of Information Sciences, Tokyo University of Science, Japan}
\affil[2]{Department of Mathematics, Hiroshima University, Japan}
\date{Last update: \today}
\begin{document}
\maketitle

\begin{abstract}

This paper presents objective priors for robust Bayesian estimation against outliers based on divergences. The minimum $\gamma$-divergence estimator is well-known to work well estimation against heavy contamination. The robust Bayesian methods by using quasi-posterior distributions based on divergences have been also proposed in recent years. In objective Bayesian framework, the selection of default prior distributions under such quasi-posterior distributions is an important problem. In this study, we provide some properties of reference and moment matching priors under the quasi-posterior distribution based on the $\gamma$-divergence. In particular, we show that the proposed priors are approximately robust under the condition on the contamination distribution without assuming any conditions on the contamination ratio. Some simulation studies are also presented. 

%

\end{abstract}

\noindent{{\bf Keywords}: Divergence; Huber's $\varepsilon$-contamination model; Moment matching prior; Reference prior; Robust estimation}

\medskip

\noindent{{\bf Mathematics Subject Classification}: Primary 62F15; Secondary 62F35}

\allowdisplaybreaks[1]

\section{Introduction}
\label{intro}

The problem of the robust parameter estimation against outliers has a long history. For example, \citet{huber2009robust} provide an excellent review of the classical robust estimation theory. It is well-known that the maximum likelihood estimator (MLE) is not robust against outliers because it is obtained by minimizing the Kullback-Leibler (KL) divergence between the true and empirical distributions. To overcome this problem, we may use other (robust) divergences instead of the KL divergence. The robust parameter estimation based on divergences  has been one of the central topics in modern robust statistics (e.g. \cite{Basu2011robust}). Such method was firstly proposed by \cite{basu1998robust}, who referred to it as the minimum density power divergence estimator. \cite{jones2001comparison} also proposed the ``type 0 divergence", which is a modified version of the density power divergence, and \cite{fujisawa2008robust} showed that it has good robustness properties. the type 0 divergence is also known as the $\gamma$-divergence, and statistical methods based on the $\gamma$-divergence have been presented by many authors (e.g. \cite{hirose2016gamma}, \cite{kawashima2017robust}, \cite{hirose-masuda}). 

In Bayesian statistics, the robustness against outliers is also an important issue, and divergence-based Bayesian methods have been proposed in recent years. Such methods are known as quasi-Bayes (or general Bayes) methods in some studies, and the corresponding posterior distributions are called quasi-posterior (or general posterior) distributions. To overcome the model misspecification problem (see \cite{ bissiri2016general}), the quasi-posterior distributions are based on a general loss function rather than the usual log-likelihood function. In general, such general loss functions may not depend on an assumed statistical model. However, in this study, we use loss functions that depend on the assumed model because we are interested in the robust estimation problem against outliers, that is, the model is not misspecified but data generating distribution is wrong. In other words, we use divergences or scoring rules as a loss function for the quasi-posterior distribution (see also  \cite{Hooker2014robust}, \cite{ghosh2016robust}, \cite{nakagawa2019robust}, \cite{jewson2018robust}, \cite{hashimoto2020gamma}).
For example, \cite{Hooker2014robust} used the Hellinger divergence, \cite{ghosh2016robust} used the density power divergence, and \cite{nakagawa2019robust} and \cite{hashimoto2020gamma} used the $\gamma$-divergence. In particular, the quasi-posterior distribution based on the $\gamma$-divergence is referred to as the $\gamma$-posterior in \cite{nakagawa2019robust}, and they showed that the $\gamma$-posterior has good robustness properties to overcome problems in \cite{ghosh2016robust}. 

 Although the selection of priors is an important issue in Bayesian statistics, we often have no prior information in some practical situations. In such cases, we may use priors called default or objective priors, and we should select an appropriate objective prior in a given context. In particular, we consider the reference and moment matching priors in this paper. The reference prior was firstly proposed by \cite{bernardo1979reference} and moment matching prior was proposed by \cite{ghosh2011moment}. However, such objective priors generally depend on an unknown data generating distribution when we cannot assume that the contamination ratio is approximately zero. For example, if we assume the $\varepsilon$-contamination model (see e.g. \cite{huber2009robust}) as a data generating distribution, many objective priors depend on unknown contamination ratio and unknown contamination distribution because these objective priors involve the expectations under the data generating distribution. Although \cite{giummole2019objective} derived some kinds of reference priors under the quasi-posterior distributions based on some kinds of scoring rules, they just discussed the robustness of such reference priors when the contamination ratio $\ep$ is approximately zero. Furthermore, their simulation studies largely depend on the assumption for the contamination ratio. In other words, they indirectly assume that the contamination ratio $\ep$ is approximately zero. The current study derives the moment matching priors under the quasi-posterior distribution in a similar way to \cite{ghosh2011moment}, and we show that the reference and moment matching priors based on the $\gamma$-divergence do not approximately depend on such unknown quantities under a certain assumption for the contamination distribution even if the contamination ratio is not small. 

The rest of this paper is organized as follows. In Section \ref{sec:robust}, we review robust Bayesian estimation based on divergences referring to some previous studies. We derive moment matching priors based on the quasi-posterior distribution using an asymptotic expansion of the quasi-posterior distribution given by \cite{giummole2019objective} in Section \ref{objective}. Furthermore, we show that reference and moment matching priors based on the $\gamma$-posterior do not depend on the contamination ratio and the contamination distribution.
In Section \ref{sec:sim}, we compare empirical bias and mean squared error of posterior means through some simulation studies. Some discussion about the selection of tuning parameters are also provided.

\section{Robust Bayesian estimation using divergences}
\label{sec:robust}
In this section, we review a framework of robust estimation in the seminal paper by Fujisawa and Eguchi \cite{fujisawa2008robust} and we introduce the robust Bayesian estimation using divergences. Let $X_1, \ldots, X_n$ be independent and identically distributed (iid) random variables according to a distribution $G$ with the probability density function $g$ on $\Omega$, and let $\bm{X}_n=(X_1,\dots, X_n)$. We assume the parametric model $f_{\bm{\theta}}=f(x,\bm{\theta})$ ($\bm{\theta} \in \Theta \subset \mathbb{R}^p$), and consider the estimation problem for $\bm{\theta}$. 

Then, the $\gamma$-divergence between two probability densities $g$ and $f$ is defined by 
\begin{align*}
D_{\gamma}(g,f_{\bm{\theta}})=& \frac{1}{\gamma(\gamma+1)} \log \int_{\Omega} g(x)^{1+\gamma} \dd x\\
& -\frac{1}{\gamma} \log \int_{\Omega} g(x) f_{\bm{\theta}}(x)^{\gamma} \dd x +\frac{1}{\gamma+1} \log \int_{\Omega} f_{\bm{\theta}}(x)^{1+\gamma} \dd x,  
\end{align*}
where $\gamma>0$ is a tuning parameter on robustness. We also define the $\gamma$-cross entropy as
\[d_{\gamma}(g,f_{\bm{\theta}})=-\frac{1}{\gamma} \log \int_{\Omega} g(x) f_{\bm{\theta}}(x)^{\gamma} \dd x +\frac{1}{\gamma+1} \log \int_{\Omega} f_{\bm{\theta}}(x)^{1+\gamma} \dd x\]
 (see \cite{jones2001comparison}, \cite{fujisawa2008robust}). 
 
\subsection{Framework of robustness}

Fujisawa and Eguchi \cite{fujisawa2008robust} introduced a new framework of robustness which is different from classical them. When some of the data values are regarded as outliers, we need a robust estimation procedure. Typically, an observation that takes a large value is regarded as an outlier. Under this convention, many robust parameter estimation procedures have been proposed to reduce the bias caused by an outlier. An influence function is one of methods to measure the sensitivity of models against outliers. It is known that the bias of an estimator is approximately proportional to the influence function when the contamination ratio $\ep$ is small. However, when $\ep$ is not small, the bias cannot be approximately proportional to the influence function. \cite{fujisawa2008robust} showed that the likelihood function based on the $\gamma$-divergence gives a sufficiently small bias under heavy contamination. Suppose that observations are generated from a mixture distribution $g(x)=(1-\ep) f(x) +\ep \delta(x)$, where $f(x)$ is the underlying density, $\delta(x)$ is another density function and $\ep$ is the contamination ratio. In Section \ref{objective}, we assume that the condition
\begin{align}
\nu_f = \left\{\int_{\Omega} \delta(x)f(x)^{\gamma_0} \dd x\right\}^{1/\gamma_0} \approx 0 \label{con_siki2}
\end{align}
holds for a constant $\gamma_0>0$ (see \cite{fujisawa2008robust}). When $x_0$ is generated from $\delta(x)$, we call $x_0$ outlier. We note that we do not assume that the contamination ratio $\ep$ is sufficiently small. This condition means that the contamination distribution $\delta(x)$ mostly lies on the tail of the underlying density $f(x)$. In other words, for an outlier $x_0$, it holds that $f(x_0) \approx 0$. We note that the condition \eqref{con_siki2} is also a basis to prove the robustness against outliers for the minimum $\gamma$-divergence estimator in \cite{fujisawa2008robust}. Furthermore, \cite{kanamori2014affine} provides some theoretical results of the $\gamma$-divergence, and related works in frequentist setting have been also developed (e.g. \cite{hirose2016gamma}, \cite{kawashima2017robust}, \cite{hirose-masuda}, and so on). 

The rest of this section, we make a brief review of the general Bayesian updating, and introduce some previous works which are closely related to this paper.

\subsection{General Bayesian updating}
We consider the same framework as \cite{ bissiri2016general} and \cite{jewson2018robust}. We are interested in $\bm{\theta}=\bm{\theta}(G)$ ($\bm{\theta} \in \Theta \subseteq \mathbb{R}^p$) and we define a loss function $\ell_{\bm{\theta}}(\bm{x}):=\ell(\bm{\theta}, \bm{x})$. Further, let $\bm{\theta}^*=\arg \min_{\bm{\theta} \in \Theta} \mathbb{E}_G\ell_{\bm{\theta}}(\bm{X})$ be the target parameter. We define the risk function by $\mathbb{E}_G\ell_{\bm{\theta}}(\bm{X})$, and its empirical risk is defined by $R_n(\bm{\theta})=(1/n)\sum_{i=1}^n \ell_{\bm{\theta}}(X_i)$. For the prior distribution $\pi(\theta)$, the quasi-posterior density is defined by
\begin{align*}
\pi_{n,\omega}(\bm{\theta})\propto  \exp \{-\omega n R_n(\bm{\theta})\} \pi(\bm{\theta}), 
\end{align*}
where $\omega>0$ is a tuning parameter called learning rate. We note that the quasi-posterior is also called general posterior or Gibbs posterior. In this paper, we fix $\omega=1$ for the same reason as \cite{jewson2018robust}. For example, if we set $\ell_{\mu}(x)=|x-\mu|$, we can estimate the median of distribution without assuming the statistical model. However, we consider the model-dependent loss function which is based on statistical divergence (or scoring rule) in this study (see also \cite{ghosh2016robust}, \cite{nakagawa2019robust}, \cite{jewson2018robust}, \cite{hashimoto2020gamma}). The unified framework of inference using the quasi-posterior distribution is discussed by \cite{ bissiri2016general}.


\subsection{ Assumptions and previous works}

Let $d(\cdot, \cdot)$ be a cross entropy induced by a divergence and let $\{f_{\bm{\theta}}: \bm{\theta} \in \Theta\}$ be a statistical model. In general, the quasi-posterior distribution based on a cross entropy is defined by
\begin{align}\label{quasi-post}
\pi^{(d)}(\bm{\theta}|\bm{X}_n)\propto \exp\left\{-n d(\bar{g}, f_{\bm{\theta}})\right\}\pi(\bm{\theta})=\exp\left\{ \sum_{i = 1}^n q^{(d)}(X_i; \bm{\theta})\right\}\pi(\bm{\theta}),
\end{align}
where $d(\bar{g}, f_{\bm{\theta}})$ is the empirically estimated cross entropy and $\bar{g}$ is the empirical density function. In robust statistics based on divergences, we may use the cross entropy induced by a robust divergence (e.g. \cite{basu1998robust}, \cite{jones2001comparison}, \cite{fujisawa2008robust}). In this paper, we mainly use the $\gamma$-cross entropy proposed by \cite{jones2001comparison} and \cite{fujisawa2008robust}. Recently, \cite{nakagawa2019robust} proposed the $\gamma$-posterior based on the monotone transformation of the $\gamma$-cross entropy
\begin{align*}
\tilde{d}_{\gamma}(g, f_{\bm{\theta}}) &= - \frac{1}{\gamma}\left\{\exp(-\gamma d_{\gamma}(g, f_{\bm{\theta}})) -1\right\} = -\frac{1}{\gamma}\frac{\int_{\Omega} g(x)f_{\bm{\theta}}(x)^{\gamma}\dd x}{\left(\int_{\Omega}f_{\bm{\theta}}(x)^{1+\gamma}\dd x\right)^{\gamma/(1+\gamma)}}+\frac{1}{\gamma} 
\end{align*}
for $\gamma>0$. The $\gamma$-posterior is defined by taking $d(\bar{g},f_{\bm{\theta}})=\tilde{d}_{\gamma}(\bar{g}, f_{\bm{\theta}})$ in \eqref{quasi-post}. 
On the other hand, \cite{ghosh2016robust} proposed the $R^{(\alpha)}$-posterior based on the density power cross entropy
\[
d_{\alpha} (g,f_{\bm{\theta}}) = -\frac{1}{\alpha}  \int g f_{\bm{\theta}}^{\alpha} \dd x +\frac{1}{1+\alpha} \log\int_{\Omega} f_{\bm{\theta}}^{1+\alpha} \dd x
\]
for $\alpha>0$. The $R^{(\alpha)}$-posterior is defined by taking $d(\bar{g},f_{\theta})=d_{\alpha}(\bar{g}, f_{\theta})$ in \eqref{quasi-post}. 
Note that cross entropies $d_{\alpha}(\cdot, \cdot)$ and $\tilde{d}_{\gamma}(\cdot, \cdot)$ converge to the negative log-likelihood function as $\alpha\to 0$ and $\gamma\to 0$, respectively. Hence, we can establish that they are some kinds of generalization of the negative log-likelihood function.  It is known that the posterior mean based on $R^{(\alpha)}$-posterior works well for the estimation of a location parameter in the presence of outliers. However, this is known to be unstable in the case of the estimation for a scale parameter (see \cite{nakagawa2019robust}). Nakagawa and Hashimoto \cite{nakagawa2019robust} showed that the posterior mean under the $\gamma$-posterior has small bias under heavy contamination for both location and scale parameters in some simulation studies. 

 Let $\bm{\theta}_g:=\arg \min_{\bm{\theta} \in \Theta} d(g, f_{\bm{\theta}})$ be the target parameter. We now assume the following regularity conditions on the density function $f_{\bm{\theta}}(x) = f(x; \bm{\theta}) \ (\bm{\theta} \in \Theta \subset \mathbb{R}^p)$. We use indices to denote derivatives of $\bar{D}(\bm{\theta}) = d(\bar{g}, f_{\bm{\theta}})$ with respect to the components of the parameter $\bm{\theta}$. For example, $\bar{D}_{ijk}(\bm{\theta}) = \partial_{i}\partial_{j}\partial_{k}\bar{D}(\bm{\theta})$ and $\bar{D}_{ijk\ell}(\bm{\theta}) = \partial_{i}\partial_{j}\partial_{k}\partial_{\ell}\bar{D}(\bm{\theta})$ for $i, j, k ,\ell = 1, \ldots, p$. 
\begin{enumerate}[label = (A\arabic*)]
\item The support of the density function does not depend on unknown parameter $\bm{\theta}$ and $f_{\bm{\theta}}$ is fifth-order differentiable with respect to $\bm{\theta}$ in neighbourhood $U$ of $\bm{\theta}_g$. \label{a1}
\item  Interchange of the order of integration with respect to $x$ and differentiation as $\bm{\theta}_g$ is justified. \label{a2} The expectations 
\[
\E_{g}[\partial_i\partial_j\partial_k q^{(d)}(X_1; {\bm{\theta}}_g)]\quad  \text{and} \quad  \E_{g}[\partial_i\partial_j\partial_k\partial_\ell q^{(d)}(X_1; {\bm{\theta}}_g)]
\] 
are all finite and $M_{ijk\ell s}(x)$ exists such that 
\begin{align*} 
\sup_{\bm{\theta} \in U} \left|\partial_i\partial_j\partial_k\partial_\ell\partial_s q^{(d)}(x; \bm{\theta})\right| \leq M_{ijk\ell s}(x)
\end{align*}
 and $\E_{g}\left[M_{ijk\ell s}(X_1)\right] < \infty$ for all $i, j, k, \ell, s = 1, \ldots, p$, where $\partial_i = \partial/\partial \theta_i$ and $\partial = \partial/\partial \bm{\theta}$, and $\E_g(\cdot)$ is expectation of $X$ with respect to a probability density function $g$.
\item For any $\delta > 0$, with probability one \label{a3}
\begin{align*}
\sup_{\| \bm{\theta} - \bm{\theta}_g \| > \delta } \left\{ d(\bar{g}, f_{\bm{\theta}_g}) - d(\bar{g}, f_{\bm{\theta}}) \right\} < -\ep 
\end{align*}
for some $\ep > 0$ and for all sufficiently large $n$. 
\end{enumerate}
The matrices $I^{(d)}(\bm{\theta})$ and $J^{(d)}(\bm{\theta})$ are defined by 
 \begin{align*}
&I^{(d)}(\bm{\theta}) = \E_{g}\left[\partial q^{(d)}( X_1; \bm{\theta}) \partial^{\top} q^{(d)}( X_1; \bm{\theta})\right], \\
& J^{(d)}(\bm{\theta}) = - \E_{g}\left[\partial \partial^{\top} q^{(d)}( X_1; \bm{\theta})\right],    
\end{align*}
respectively. We also assume that $I^{(d)}(\bm{\theta})$ and $J^{(d)}(\bm{\theta})$ are positive definite matrices. Under these conditions, \cite{ghosh2016robust} and \cite{nakagawa2019robust} discussed  several asymptotic properties of the quasi-posterior distributions and the corresponding posterior means.

In terms of the higher-order asymptotic theory, Giummol{\`e} et al.  \cite{giummole2019objective} derived the asymptotic expansion of such quasi-posterior distributions. We now introduce the notation that will be used in the rest of the paper. Then \cite{giummole2019objective} presented the following theorem.

\begin{thm}[Giummol{\`e} et al.  \cite{giummole2019objective}]\label{asymp3}
 Under the conditions {\rm \ref{a1}--\ref{a3}}, we assume that $\hat{\bm{\theta}}_n^{(d)}$ is a consistent solution of $\partial d(\bar{g}, f_{\bm{\theta}}) = \bm{0}$ and $\hat{\bm{\theta}}^{(d)}_n \xrightarrow{p} \bm{\theta}_g $ as $n \rightarrow \infty$. Then for any prior density function $\pi(\bm{\theta})$ that is third-order differentiable and positive at $\bm{\theta}_g$, it holds that
\begin{align}
\begin{split}\label{expansion1}
\pi^{*(d)}(\bm{t}_n|\bm{X}_n)=\phi\left(\bm{t}_n; \tilde{J}^{-1}\right)\left(1 + n^{-1/2}A_1(\bm{t}_n) + n^{-1}A_2(\bm{t}_n)\right) + O_p(n^{-3/2}) 
\end{split}
\end{align} 
where 
$\pi^{*(d)}(\bm{t}_n|\bm{X}_n)$ is the quasi-posterior density function of the normalized random variable $\bm{t}_n = (t_1, \ldots, t_p)^{\top} = \sqrt{n}(\bm{\theta}-\hat{\bm{\theta}}_n^{(d)})$ given $\bm{X}_n$, and $\phi(\cdot; A)$ is the density function of a p-variate normal distribution with zero mean vector and covariance matrix $A$, and $\tilde{J} = J^{(d)}(\hat{\bm{\theta}}^{(d)}_n)$, $\tilde{J}^{-1} = (\tilde{J}^{ij})$, and
\begin{align*}
A_1(\bm{t}_n) =&  \sum_{i= 1}^{p}\frac{\partial_i \pi(\hat{\bm{\theta}}_n^{(d)})}{\pi(\hat{\bm{\theta}}_n^{(d)})} t_i + \frac{1}{6}\sum_{i, j, k}\bar{D}_{ijk}(\hat{\bm{\theta}}_n^{(d)})t_it_jt_k, \\
A_2(\bm{t}_n) =& \sum_{i, j}\frac{1}{2}\frac{\partial_i\partial_j\pi(\hat{\bm{\theta}}_n^{(d)})}{\pi(\hat{\bm{\theta}}_n^{(d)})}(t_it_j - \tilde{J}^{ij}) -\sum_{i, j, k, \ell} \frac{1}{6}\frac{\partial_i \pi(\hat{\bm{\theta}}_n^{(d)})}{\pi(\hat{\bm{\theta}}_n^{(d)})} \bar{D}_{jk\ell}(\hat{\bm{\theta}}_n^{(d)})\left(t_it_jt_kt_\ell - 3\tilde{J}^{ij}\tilde{J}^{k\ell}\right)\\
& -  \sum_{i, j, k, \ell} \frac{1}{24} \bar{D}_{ijk\ell}(\hat{\bm{\theta}}_n^{(d)})\left(t_it_jt_kt_\ell - 3\tilde{J}^{ij}\tilde{J}^{k\ell}\right) \\
& + \sum_{i, j, k, h, g, f} \frac{1}{72} \bar{D}_{ijk}\bar{D}_{hgf}(2t_it_jt_kt_ht_gt_f - 15\tilde{J}^{ij}\tilde{J}^{kh}\tilde{J}^{gf}). 
\end{align*}
\end{thm}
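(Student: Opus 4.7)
The plan is to carry out a Laplace-type asymptotic expansion of the quasi-posterior about its mode $\hat{\bm{\theta}}_n^{(d)}$, mimicking the classical Bayesian expansion of the ordinary posterior, but with the negative empirical cross entropy $n\bar{D}(\bm{\theta})=-\sum_i q^{(d)}(X_i;\bm{\theta})+\text{const}$ playing the role of the negative log-likelihood. After the change of variables $\bm{t}_n=\sqrt{n}(\bm{\theta}-\hat{\bm{\theta}}_n^{(d)})$, the quasi-posterior of $\bm{t}_n$ is proportional to $\exp\{-n[\bar{D}(\hat{\bm{\theta}}_n^{(d)}+\bm{t}_n/\sqrt{n})-\bar{D}(\hat{\bm{\theta}}_n^{(d)})]\}\,\pi(\hat{\bm{\theta}}_n^{(d)}+\bm{t}_n/\sqrt{n})$. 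Under \ref{a1} I Taylor expand the bracketed quantity to fifth order: the linear term vanishes because $\hat{\bm{\theta}}_n^{(d)}$ solves $\partial\bar{D}=\bm{0}$; the quadratic term yields the Gaussian kernel $\phi(\bm{t}_n;\tilde{J}^{-1})$, using $\bar{D}_{ij}(\hat{\bm{\theta}}_n^{(d)})=\tilde{J}_{ij}+O_p(n^{-1/2})$ from the law of large numbers justified by \ref{a2} (and absorbing the discrepancy into the $n^{-1/2}$-order correction); the cubic, quartic, and quintic terms carry explicit factors $n^{-1/2}$, $n^{-1}$, and $n^{-3/2}$, respectively.

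Next, I Taylor expand the prior ratio $\pi(\hat{\bm{\theta}}_n^{(d)}+\bm{t}_n/\sqrt{n})/\pi(\hat{\bm{\theta}}_n^{(d)})$ to second order and apply $e^x=1+x+x^2/2+O(x^3)$ to the non-quadratic part of the exponent. Collecting powers of $n^{-1/2}$ produces an unnormalized density of the form $\phi(\bm{t}_n;\tilde{J}^{-1})\bigl(1+n^{-1/2}\tilde{A}_1(\bm{t}_n)+n^{-1}\tilde{A}_2(\bm{t}_n)\bigr)+O_p(n^{-3/2})$. Here $\tilde{A}_1$ combines the linear prior correction with the cubic derivative term, while $\tilde{A}_2$ aggregates the quadratic prior correction, the quartic derivative term, the product of the linear prior correction with the cubic derivative term, and the square of the cubic derivative term; the coefficient $1/72=(1/6)^2/2$ in the last piece originates from the $x^2/2$ in the exponential series.

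To obtain a proper density I divide by the integral of the expansion, which reduces to Gaussian moments of $\phi(\cdot;\tilde{J}^{-1})$ evaluated by Isserlis' theorem: $\E_\phi[t_it_jt_kt_\ell]=\tilde{J}^{ij}\tilde{J}^{k\ell}+\tilde{J}^{ik}\tilde{J}^{j\ell}+\tilde{J}^{i\ell}\tilde{J}^{jk}$ and $\E_\phi[t_it_jt_kt_ht_gt_f]$ is the sum over the $15$ Wick pairings. Subtracting these constants inside the parentheses yields the correction terms $-3\tilde{J}^{ij}\tilde{J}^{k\ell}$ and $-15\tilde{J}^{ij}\tilde{J}^{kh}\tilde{J}^{gf}$ in $A_2$ and enforces $\E_\phi[A_1]=\E_\phi[A_2]=0$, reproducing the stated formulas. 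The uniform fifth-derivative bound from \ref{a2} controls the Taylor remainder at $O_p(n^{-3/2})$ on the local region $\|\bm{t}_n\|\le n^{1/2}\delta$, while \ref{a3} together with consistency of $\hat{\bm{\theta}}_n^{(d)}$ provides an exponentially small bound on the complement, so that both the numerator and the normalizing constant enjoy the claimed accuracy. The main obstacle is the combinatorial bookkeeping at order $n^{-1}$: pairing the cubic-squared hexic term correctly via Isserlis and then combining it with the quartic term, the linear-prior-times-cubic cross term, and the renormalization contributions so that every coefficient matches the stated $A_2$---in particular the explicit factor $2$ in front of $t_it_jt_kt_ht_gt_f$, which arises from the renormalization step interacting with the cubic contribution to $\tilde{A}_1$.
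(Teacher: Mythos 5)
The paper does not actually prove this theorem: it is quoted from Giummol\`e et al.\ (2019) and the proof is deferred entirely to the appendix of that reference. Your outline — change of variables to $\bm{t}_n$, Taylor expansion of the empirical objective to fifth order with the linear term killed by the estimating equation, second-order expansion of the prior ratio, the $e^x=1+x+x^2/2+O(x^3)$ step on the cubic/quartic parts of the exponent (whence $1/72=(1/6)^2/2$), renormalization via Isserlis' theorem, and the splitting of the domain into $\|\bm{t}_n\|\le\sqrt{n}\delta$ controlled by \ref{a2} and its complement controlled by \ref{a3} plus consistency — is exactly the standard Johnson/Ghosh-type argument that the cited appendix carries out, so in substance you have reconstructed the intended proof.

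One point in your sketch would, as written, fail to reproduce the stated $A_1$. You propose to use $\bar{D}_{ij}(\hat{\bm{\theta}}_n^{(d)})=\tilde{J}_{ij}+O_p(n^{-1/2})$ and to "absorb the discrepancy into the $n^{-1/2}$-order correction"; doing so literally adds a random term $-\tfrac{1}{2}\sqrt{n}\,(\bar{D}_{ij}(\hat{\bm{\theta}}_n^{(d)})-\tilde{J}_{ij})\,t_it_j$ to $A_1$, which is $O_p(1)$ and does not appear in the theorem. For the expansion to hold in the stated form, the Gaussian kernel must be built from the empirical Hessian $\bar{D}_{ij}(\hat{\bm{\theta}}_n^{(d)})$ itself (i.e., $\tilde{J}$ is to be read as the observed, not the expected, curvature matrix, notwithstanding the paper's writing $\tilde{J}=J^{(d)}(\hat{\bm{\theta}}_n^{(d)})$); the replacement of $\tilde{J}$ by $J^{(d)}(\bm{\theta}_g)$ is legitimate only later, in the limit taken in Theorem \ref{moment}. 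Relatedly, be careful with the sign and normalization conventions for $\bar{D}_{ijk}$ (here $\bar{D}=d(\bar{g},f_{\bm{\theta}})=-n^{-1}\sum_i q^{(d)}(X_i;\cdot)$ up to a constant, so $\bar{D}_{ijk}\to -g^{(d)}_{ijk}$): your bookkeeping must fix one convention and carry it through, since the cubic term enters the exponent with a definite sign. These are matters of care rather than of method; the approach itself is the right one.
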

\begin{proof}
The proof is given in the Appendix of \cite{giummole2019objective}. 
\end{proof}

As previously mentioned, quasi-posterior distributions depend on a cross entropy induced by a divergence and a prior distribution. If we have some information about unknown parameters $\bm{\theta}$, we can use a prior distribution that takes such prior information into account. However, in the absence of prior information, we often use prior distributions known as default or objective priors. \cite{giummole2019objective} proposed the reference prior for quasi-posterior distributions, which is a type of objective priors (see \cite{bernardo1979reference}). The reference prior $\pi_R$ is obtained by asymptotically maximizing the expected KL divergence between prior and posterior distributions. As a generalization of the reference prior, \cite{ghosh2011general} discussed such priors under a general divergence measure known as the $\alpha$-divergence (see also \cite{liu2014divergence}, \cite{hashimoto2021}). The reference prior under the $\alpha$-divergence is given by asymptotically maximizing the expected $\alpha$-divergence
\[
H(\pi) = \E[{\cal D}^{(\alpha)}(\pi^{(d)}(\bm{\theta}| \bm{X}_n), \pi(\bm{\theta}))],
\]
 where $\mathcal{D}^{(\alpha)}$ is the $\alpha$-divergence defined as  
\begin{align*}
 \mathcal{D}^{(\alpha)}(\pi^{(d)}(\bm{\theta}| \bm{X}_n), \pi(\bm{\theta}) ) = \frac{1}{\alpha(1-\alpha)} \int_{\Theta} \left\{1-\left(\frac{\pi(\bm{\theta})}{\pi^{(d)}(\bm{\theta}|\bm{X}_n)}\right)\right\}^{\alpha}\pi^{(d)}(\bm{\theta}|\bm{X}_n) \dd \bm{\theta}
\end{align*}
which corresponds to the KL divergence as $\alpha\to 0$, the Hellinger divergence for $\alpha = 1/2$, and the $\chi^2$-divergence for $\alpha = -1$. \cite{giummole2019objective} derived reference priors with the $\alpha$-divergence under quasi-posterior based on some kinds of proper scoring rules such as the Tsallis scoring rule and Hyv\"{a}rinen scoring rule. We note that the former rule is the same as the density power score of \cite{basu1998robust} with minor notational modifications. 

\begin{thm}[Giummol{\`e} et al.  \cite{giummole2019objective}]\label{reference}
When $|\alpha| < 1$, the reference prior that asymptotically maximizes the expected $\alpha$-divergence between quasi-posterior and prior distributions is given by 
\[
\pi_{R}(\bm{\theta}) \propto \det(J^{(d)}(\bm{\theta}))^{1/2}. 
\]
\end{thm}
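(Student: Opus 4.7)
The plan is to insert the asymptotic expansion of Theorem~\ref{asymp3} into the $\alpha$-divergence, extract the leading prior-dependent functional, and optimize it by a Lagrange multiplier argument. Rewriting the $\alpha$-divergence in the equivalent form
\[
\mathcal{D}^{(\alpha)}(\pi^{(d)}(\bm{\theta}|\bm{X}_n),\pi(\bm{\theta})) = \frac{1}{\alpha(1-\alpha)}\left\{1 - \int_{\Theta}\pi(\bm{\theta})^{\alpha}\pi^{(d)}(\bm{\theta}|\bm{X}_n)^{1-\alpha}\,\dd\bm{\theta}\right\},
\]
the task reduces to a Laplace-type evaluation of the inner integral.

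First I would change variables to $\bm{t}_n = \sqrt{n}(\bm{\theta}-\hat{\bm{\theta}}_n^{(d)})$, substitute $\pi^{*(d)}(\bm{t}_n|\bm{X}_n) = \phi(\bm{t}_n;\tilde{J}^{-1}) + O_p(n^{-1/2})$ from \eqref{expansion1}, and Taylor-expand $\pi(\hat{\bm{\theta}}_n^{(d)}+\bm{t}_n/\sqrt{n}) = \pi(\hat{\bm{\theta}}_n^{(d)})\{1+O_p(n^{-1/2})\}$. Performing the resulting Gaussian integral in closed form gives
\[
\int_{\Theta}\pi^{\alpha}(\pi^{(d)})^{1-\alpha}\,\dd\bm{\theta} = \left(\frac{2\pi}{n}\right)^{p\alpha/2}\!(1-\alpha)^{-p/2}\,\pi(\hat{\bm{\theta}}_n^{(d)})^{\alpha}\det\bigl(J^{(d)}(\hat{\bm{\theta}}_n^{(d)})\bigr)^{-\alpha/2}\{1 + o_p(1)\}.
\]
Taking expectation jointly in $\bm{\theta}\sim\pi$ and $\bm{X}_n\sim f_{\bm{\theta}}^{\otimes n}$, and using $\hat{\bm{\theta}}_n^{(d)}\xrightarrow{p}\bm{\theta}$, the prior-dependent part of $H(\pi)$ becomes, up to constants independent of $\pi$,
\[
\int_{\Theta}\pi(\bm{\theta})^{1+\alpha}\det\bigl(J^{(d)}(\bm{\theta})\bigr)^{-\alpha/2}\,\dd\bm{\theta}.
\]

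Next, optimizing this functional over densities $\pi$ under the constraint $\int\pi\,\dd\bm{\theta}=1$ by Lagrange multipliers yields the first-order condition $(1+\alpha)\pi(\bm{\theta})^{\alpha}\det(J^{(d)}(\bm{\theta}))^{-\alpha/2} = \lambda$, equivalently $\pi_R(\bm{\theta})\propto\det(J^{(d)}(\bm{\theta}))^{1/2}$. The prefactor $1/(\alpha(1-\alpha))$ is positive on $0<\alpha<1$ and negative on $-1<\alpha<0$, so the direction of the optimization flips across $\alpha=0$; however, the convexity (resp.\ concavity) of $\pi\mapsto \pi^{1+\alpha}$ in these two ranges ensures that the same stationary point is the correct extremum whenever $|\alpha|<1$, and the known limits $\alpha\to 0$ and $\alpha = 1/2$ recover the reference priors with respect to the Kullback--Leibler and Hellinger divergences, respectively.

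The main obstacle is the asymptotic bookkeeping rather than the algebra. One must verify that the $O_p(n^{-1/2})$ and $O_p(n^{-1})$ correction terms in \eqref{expansion1}, once raised to the $(1-\alpha)$-th power and integrated against the Gaussian weight, contribute only to subleading order, and that the remainders are uniform enough to justify interchanging the limit $n\to\infty$ with the joint expectation; the regularity assumptions \ref{a1}--\ref{a3} are intended to provide exactly this uniform control. A secondary, familiar difficulty is that the formal variational optimum may fail to be normalizable on noncompact $\Theta$; as in Bernardo's original construction of the reference prior, this is handled by a compact-exhaustion limit.
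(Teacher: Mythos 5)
Your argument is correct and is essentially the standard derivation used in the source the paper cites: the paper itself gives no proof of this theorem (it is quoted from Giummol\`e et al., whose argument --- as in Ghosh--Mergel--Liu and Liu et al. --- likewise Laplace-expands $\int \pi^{\alpha}(\pi^{(d)})^{1-\alpha}\,\dd\bm{\theta}$ to reduce the problem to extremizing $\int \pi^{1+\alpha}\det(J^{(d)})^{-\alpha/2}\,\dd\bm{\theta}$, with the sign of $1/\{\alpha(1-\alpha)\}$ and the convexity/concavity of $\pi\mapsto\pi^{1+\alpha}$ conspiring to give the same stationary point for all $|\alpha|<1$). Your Gaussian computation, including the factors $(2\pi/n)^{p\alpha/2}(1-\alpha)^{-p/2}$ and the exponent $-\alpha/2$ on $\det\tilde{J}$, checks out, and you correctly read the paper's displayed $\alpha$-divergence in its intended form with the exponent on the density ratio.
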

The result of Theorem \ref{reference} is similar to that of \cite{ghosh2011general} and \cite{liu2014divergence}. Objective priors such as the above theorem are useful because they can be determined by the data generating model. However, such priors do not have a statistical guarantee when the model is misspecified such as Huber's $\varepsilon$-contamination model. In other words, the reference prior in Theorem \ref{reference} depends on data generating distribution $g$ because of $J^{(d)}(\bm{\theta}) = - \E_{g}\left[\partial \partial^{\top} q^{(d)}(X_1; \bm{\theta})\right]$, where $g(x) = (1- \varepsilon)f_{\bm{\theta}}(x) + \varepsilon \delta(x)$ when the contamination ratio $\ep$ is not small such as heavy contamination cases. We now consider some objective priors under the $\gamma$-posterior that is robust against such unknown quantities in the next section.

\section{Main results}
\label{objective}
In this section, we show our main results. Our contributions are as follows. We derive moment matching priors for quasi-posterior distributions (Theorem \ref{moment}). We prove that the proposed priors is robust under the condition on the tail of the contamination distribution (Theorem \ref{pri_rob}).

\subsection{Moment matching priors}
The moment matching priors proposed by \cite{ghosh2011moment} are priors that match the posterior mean and MLE up to the higher order (see also \cite{hashimoto2019moment}). In this section, we attempt to extend the results of \cite{ghosh2011moment} to the context of quasi-posterior distributions. Our goal is to identify a prior such that the difference between the quasi-posterior mean $\tilde{\bm{\theta}}^{(d)}_n$ and frequentist minimum divergence estimator $\hat{\bm{\theta}}^{(d)}_n$ converges to zero up to the order of $o(n^{-1})$. From Theorem \ref{asymp3}, we have the following theorem. 
\begin{thm} \label{moment}
Let $\tilde{\bm{\theta}}_n^{(d)} = (\tilde{\theta}_1, \ldots, \tilde{\theta}_p)$, $\hat{\bm{\theta}}_n^{(d)} = (\hat{\theta}_1, \ldots, \hat{\theta}_p)$ and $\bm{t}_n=(t_1,\dots, t_p)^{\top}=\sqrt{n}(\bm{\theta} -\hat{\bm{\theta}}_n^{(d)})$. Under the same assumptions as Theorem \ref{asymp3}, it holds that
\begin{align*}
\begin{split}
 n \left(\tilde{\theta}_{\ell}^{(d)} - \hat{\theta}_{\ell}^{(d)}\right)
\xrightarrow{p}   \sum_{i= 1}^{p}\frac{\partial_i \pi(\bm{\theta}_g)}{\pi(\bm{\theta}_g)}J^{i\ell} + \frac{1}{6}\sum_{i, j, k}g_{ijk}^{(d)}(\bm{\theta}_g) \left(J^{ij}J^{k\ell} + J^{ik}J^{j\ell} + J^{i\ell}J^{jk}\right)
\end{split}
\end{align*}
as $n\to \infty$, where $J = J^{(d)}(\bm{\theta}_g)$, $J^{-1} = (J^{ij})$ and $g_{ijk}^{(d)}(\bm{\theta}) = \E_g\left[\partial_i \partial_j\partial_k q^{(d)}(X_1; \bm{\theta})\right]$. Furthermore, if we set a prior which satisfies
\begin{align}\label{matching-eq}
 \frac{\partial_\ell \pi(\bm{\theta})}{\pi(\bm{\theta})} + \frac{1}{2}\sum_{i, j}g_{ij\ell}^{(d)}(\bm{\theta}) J^{ij}(\bm{\theta})  = 0
\end{align} 
for all $\ell = 1, \ldots, p$, then it holds that 
\[
n \left(\tilde{\theta}_{\ell}^{(d)} - \hat{\theta}_{\ell}^{(d)}\right) \xrightarrow{p} 0
\]
for $\ell = 1, \ldots, p$ as $n\to \infty$, where $\{J^{(d)}(\bm{\theta})\}^{-1} = (J^{ij}(\bm{\theta}))$. 
\end{thm}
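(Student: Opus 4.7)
The plan is to compute the quasi-posterior mean of $t_\ell=\sqrt{n}(\theta_\ell-\hat{\theta}_\ell^{(d)})$ by integrating the asymptotic expansion of Theorem~\ref{asymp3} term by term against $t_\ell$, extracting the $O_p(n^{-1/2})$ contribution, and rescaling by $\sqrt{n}$. By definition of $\tilde{\bm{\theta}}_n^{(d)}$ as the quasi-posterior mean,
\begin{align*}
\sqrt{n}\bigl(\tilde\theta_\ell^{(d)}-\hat\theta_\ell^{(d)}\bigr)=\int t_\ell\,\pi^{*(d)}(\bm{t}_n|\bm{X}_n)\,\dd\bm{t}_n= n^{-1/2}\!\!\int t_\ell\,A_1(\bm{t}_n)\,\phi(\bm{t}_n;\tilde J^{-1})\,\dd\bm{t}_n+O_p(n^{-3/2}),
\end{align*}
because the leading Gaussian piece integrates to zero (first moment of a centered normal), and, crucially, the $n^{-1}A_2$ piece also integrates to zero: every monomial of $A_2(\bm{t}_n)$ has \emph{even} total degree in $\bm{t}_n$ (degrees $2$, $4$, and $6$), so $t_\ell A_2(\bm{t}_n)$ is odd and vanishes under the centered Gaussian $\phi(\cdot;\tilde J^{-1})$.

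Next, Isserlis' theorem gives $E_\phi[t_\ell t_i]=\tilde J^{i\ell}$ and $E_\phi[t_\ell t_i t_j t_k]=\tilde J^{i\ell}\tilde J^{jk}+\tilde J^{j\ell}\tilde J^{ik}+\tilde J^{k\ell}\tilde J^{ij}$. Plugging $A_1$ from Theorem~\ref{asymp3} into the surviving integral and multiplying by $\sqrt n$ yields
\begin{align*}
n\bigl(\tilde\theta_\ell^{(d)}-\hat\theta_\ell^{(d)}\bigr)=\sum_{i}\frac{\partial_i\pi(\hat{\bm{\theta}}_n^{(d)})}{\pi(\hat{\bm{\theta}}_n^{(d)})}\tilde J^{i\ell}+\frac{1}{6}\sum_{i,j,k}\bar D_{ijk}(\hat{\bm{\theta}}_n^{(d)})\bigl(\tilde J^{i\ell}\tilde J^{jk}+\tilde J^{j\ell}\tilde J^{ik}+\tilde J^{k\ell}\tilde J^{ij}\bigr)+O_p(n^{-1/2}).
\end{align*}
The first assertion follows on letting $n\to\infty$: consistency $\hat{\bm{\theta}}_n^{(d)}\xrightarrow{p}\bm{\theta}_g$ together with continuity of $\pi$ and $\partial_i\pi$, the continuous mapping theorem for $\tilde J^{-1}\to J^{-1}$, and the LLN applied to $\bar D_{ijk}(\hat{\bm{\theta}}_n^{(d)})$ (legitimized by the uniform dominating function $M_{ijk\ell s}$ from \ref{a2}) together deliver the displayed probabilistic limit.

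For the second assertion, I exploit that $g_{ijk}^{(d)}$ is totally symmetric in its indices (from the equality of mixed partials afforded by \ref{a1}--\ref{a2}). Relabeling dummy indices shows
\begin{align*}
\sum_{i,j,k}g_{ijk}^{(d)}J^{ij}J^{k\ell}=\sum_{i,j,k}g_{ijk}^{(d)}J^{ik}J^{j\ell}=\sum_{i,j,k}g_{ijk}^{(d)}J^{i\ell}J^{jk},
\end{align*}
so the bracketed sum in the first assertion collapses to $3\sum_{i,j,k}g_{ijk}^{(d)}J^{ij}J^{k\ell}$, reducing the coefficient $1/6$ to $1/2$. Substituting $\partial_i\pi(\bm{\theta}_g)/\pi(\bm{\theta}_g)=-\frac{1}{2}\sum_{s,t}g_{sti}^{(d)}(\bm{\theta}_g)J^{st}$ from \eqref{matching-eq} into the first term of the reduced limit then produces exactly the negative of the collapsed second term, so the two cancel and the limit is zero.

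The principal technical obstacle is upgrading the pointwise density expansion \eqref{expansion1} to a valid expansion of the posterior moment $E^*[t_\ell|\bm{X}_n]$: one must establish uniform integrability of $|t_\ell|$ against the $O_p(n^{-3/2})$ density remainder. This follows from the exponential-type tail concentration of the quasi-posterior around $\bm{\theta}_g$ supplied by \ref{a3}, combined with the fifth-order derivative bounds in \ref{a2}, and is standard Laplace-approximation machinery that can be invoked directly from the tools underlying Theorem~\ref{asymp3}.
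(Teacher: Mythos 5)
Your proof follows essentially the same route as the paper's: expand the posterior mean of $t_\ell$ using Theorem~\ref{asymp3}, evaluate the surviving $A_1$ integral via the Gaussian moment identities $E_\phi[t_it_\ell]=\tilde J^{i\ell}$ and $E_\phi[t_it_jt_kt_\ell]=\tilde J^{ij}\tilde J^{k\ell}+\tilde J^{ik}\tilde J^{j\ell}+\tilde J^{i\ell}\tilde J^{jk}$, and pass to the limit by consistency of $\hat{\bm{\theta}}_n^{(d)}$. The only differences are cosmetic: you additionally observe that the $A_2$ term vanishes by parity (unnecessary, since after the $n^{-1/2}$ rescaling it is already absorbed into the $O_p(n^{-3/2})$ remainder), and you write out the index-symmetry collapse and cancellation for the second assertion, which the paper leaves implicit.
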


Hereafter, the prior that satisfies the equation \eqref{matching-eq} up to the order of $o_p(n^{-1})$ for all $\ell = 1, \ldots, p$ is referred to as a moment matching prior and we denote it by $\pi_M$.

\begin{proof}
From the asymptotic expansion of the posterior density (\ref{expansion1}), we have the asymptotic expansion of the posterior mean for $\theta_{\ell}$ as
\begin{align}\label{thm3-1}
\begin{split}
\tilde{\theta}_\ell^{(d)} =& \int_{\Theta} \theta_{\ell} \pi^{(d)}(\bm{\theta}| \bm{X}_n) \dd \bm{\theta}\\
 =& \hat{\theta}_\ell^{(d)}+ \frac{1}{\sqrt{n}}\int_{ \mathbb{R}^p}  t_\ell\pi^{*(d)}(\bm{t}_n|\bm{X}_n) \dd \bm{t}_n\\
 =& \hat{\theta}_\ell^{(d)}  + \frac{1}{n}\int_{\mathbb{R}^p}t_\ell\phi\left(\bm{t}_n; \tilde{J}^{-1}\right)A_1(\bm{t}_n)\dd \bm{t}_n+ O_p(n^{-3/2})
 \end{split}
\end{align} 
for $\ell = 1, \ldots, p$. 
The integral in the above equation is calculated by
\begin{align}\label{thm3-2}
\begin{split}
 \int_{\mathbb{R}^p} t_\ell A_1(\bm{t}_n)\phi\left(\bm{t}_n; \tilde{J}^{-1}\right)\dd \bm{t}_n=& \sum_{i= 1}^{p}\frac{\partial_i \pi(\hat{\bm{\theta}}_n^{(d)})}{\pi(\hat{\bm{\theta}}_n^{(d)})}\int_{\mathbb{R}^p}  t_it_{\ell} \phi\left(\bm{t}_n; \tilde{J}^{-1}\right)\dd \bm{t}_n \\
 &+ \frac{1}{6}\sum_{i, j, k}\bar{D}_{ijk}(\hat{\bm{\theta}}_n^{(d)})\int_{\mathbb{R}^p} t_it_jt_kt_\ell\phi\left(\bm{t}_n; \tilde{J}^{-1}\right)\dd \bm{t}_n \\
=& \sum_{i= 1}^{p}\frac{\partial_i \pi(\hat{\bm{\theta}}_n^{(d)})}{\pi(\hat{\bm{\theta}}_n^{(d)})}\tilde{J}^{i\ell} \\
&+ \frac{1}{6}\sum_{i, j, k}\bar{D}_{ijk}(\hat{\bm{\theta}}_n^{(d)}) \left(\tilde{J}^{ij}{\tilde{J}}^{k\ell} + \tilde{J}^{ik}\tilde{J}^{j\ell} + \tilde{J}^{i\ell}\tilde{J}^{jk}\right) + o_p(1).
\end{split} 
\end{align}
From \eqref{thm3-1} and \eqref{thm3-2} we have 
\[\tilde{\theta}_\ell^{(d)}-\hat{\theta}_\ell^{(d)}= \sum_{i= 1}^{p}\frac{\partial_i \pi(\hat{\bm{\theta}}_n^{(d)})}{\pi(\hat{\bm{\theta}}_n^{(d)})}\tilde{J}^{i\ell}+ \frac{1}{6n}\sum_{i, j, k}\bar{D}_{ijk}(\hat{\bm{\theta}}_n^{(d)}) \left(\tilde{J}^{ij}\tilde{J}^{k\ell} + \tilde{J}^{ik}\tilde{J}^{j\ell} + \tilde{J}^{i\ell}\tilde{J}^{jk}\right)+ O_p(n^{-3/2})\]
for $\ell = 1, \ldots, p$.
By using the consistency of the estimator $\hat{\bm{\theta}}_n^{(d)}$, we then have the following asymptotic difference between $\hat{\theta}_\ell^{(d)}$ and $\hat{\theta}_\ell^{(d)}$: 
\begin{align*}
n \left(\tilde{\theta}_{\ell}^{(d)} - \hat{\theta}_{\ell}^{(d)}\right) \xrightarrow{p}   \sum_{i= 1}^{p}\frac{\partial_i \pi(\bm{\theta}_g)}{\pi(\bm{\theta}_g)}J^{i\ell} + \frac{1}{6}\sum_{i, j, k}g_{ijk}^{(d)}(\bm{\theta}_g) \left(J^{ij}J^{k\ell} + J^{ik}J^{j\ell} + J^{i\ell}J^{jk}\right)
\end{align*}
as $n\to \infty$ for $\ell = 1, \ldots, p$. 
\end{proof}
In general, it is not easy to obtain the moment matching priors explicitly. Two examples are given as follows. 
\begin{eg}
When $p = 1$, the moment matching prior is given by 
\[
\pi_M(\theta) = C\exp \left\{-\int^{\theta} \frac{g_3^{(d)}(t)}{2J^{(d)}(t)}  \dd t\right\}
\]
for a constant $C$, where $g_3$ is a third derivation of $g$. This prior is very similar to that of \cite{ghosh2011moment}, but the quantities $g_3^{(d)}(t)$ and $J^{(d)}(t)$ are different from it. 
\end{eg}
\begin{eg}
When $p = 2$, we put 
\[
u_{\ell}(\theta_1, \theta_2) = \sum_{i, j}g_{ij\ell}^{(d)}(\bm{\theta})J^{ij}(\bm{\theta}) \ \ (\ell = 1, 2), 
\]
where $\bm{\theta} = (\theta_1, \theta_2)^{\top}$. If $u_{\ell}(\theta_1, \theta_2)$ only depends on $\theta_{\ell}$ for all $\ell = 1, 2$, and does not depend on other parameters $\theta_k (k \neq \ell)$, we have
 \begin{align*}
 u_{1}(\theta_1, \theta_2) \equiv u_{1}(\theta_{1}), \ \  u_{2}(\theta_1, \theta_2) \equiv u_{2}(\theta_{2}). 
 \end{align*}
Then we can solve the differential equation give by \eqref{matching-eq}, and the moment matching prior is obtained by 
\[
\pi_M(\theta_1, \theta_2) \propto \exp \left\{-\frac{1}{2}\int^{\theta_1} u_{1}(t_{1})  \dd t_{1}\right\}\exp\left\{ -\frac{1}{2}\int^{\theta_2} u_{2}(t_{2})  \dd t_{2} \right\}. 
\]
\end{eg}


\subsection{Robustness of objective priors}
For data that may be heavily contaminated, we cannot assume that the contamination ratio $\ep$ is approximately zero. In general, reference and moment matching priors depend on the contamination ratio and distribution. Therefore, we cannot directly use such objective priors for the quasi-posterior distributions because the contamination ratio $\ep$ and the contamination distribution $\delta(x)$ are unknown. In this subsection, we prove that priors based on $\gamma$-divergence are robust against theses unknown quantities. In addition to \eqref{con_siki2}, we assume the following condition of the contamination distribution 
\begin{align}
\nu_{\bm{\theta}} = \left\{\int_{\Omega} \delta(x)f_{\bm{\theta}}(x)^{\gamma_0} \dd x\right\}^{1/\gamma_0} \approx 0
  \label{con_siki1}
\end{align}
for all $\bm{\theta} \in \Theta$ and an appropriately large constant $\gamma_0>0$ (see also \cite{fujisawa2008robust}). 
Note that the assumption \eqref{con_siki1} is also a basis to prove the robustness against outliers for the minimum $\gamma$-divergence estimator in \cite{fujisawa2008robust}. Then, we have the following theorem.  
\begin{thm} \label{pri_rob}
Assume the condition \eqref{con_siki1}. Let
\begin{align*}
q^{(\gamma)}(x; \bm{\theta}) := q^{(\tilde{d}_{\gamma})}(x; \bm{\theta}) = \frac{1}{\gamma}f_{\bm{\theta}}(x)^{\gamma}\left\{\int_{\Omega}f_{\bm{\theta}}(y)^{1+\gamma} \dd y\right\}^{-\gamma/(1+\gamma)}, 
\end{align*}
and let 
\begin{align*}
h_{ij}^{(\gamma)}(\bm{\theta}) &= -\E_{f_{\bm{\theta}}}\left[\partial_i\partial_j q^{(\gamma)}(X_1; \bm{\theta})\right], \\
\tilde{g}_{ijk}^{(\gamma)}(\bm{\theta}) & = \E_{f_{\bm{\theta}}}\left[\partial_i\partial_j\partial_k q^{(\gamma)}(X_1; \bm{\theta})\right].  
\end{align*}
Then, it holds that
\begin{align}
\begin{split}\label{result1}
J_{ij}^{(\gamma)}(\bm{\theta}) &= -\E_{g}\left[\partial_i\partial_j q^{(\gamma)}(X_1; \bm{\theta})\right] =(1-\ep)h_{ij}^{(\gamma)}(\bm{\theta}) + O(\ep\nu^{\gamma}) ,\\
g_{ijk}^{(\gamma)}(\bm{\theta}) &= \E_{g}\left[\partial_i\partial_j\partial_k q^{(\gamma)}(X_1; \bm{\theta})\right] = (1-\ep)\tilde{g}_{ijk}^{(\gamma)}(\bm{\theta})+ O(\ep\nu^{\gamma}),
\end{split}
\end{align}
for $\gamma + 1 \leq \gamma_0$, where $\nu := \max\{\nu_f, \sup_{\bm{\theta}\in \Theta} \nu_{\bm{\theta}} \}$. The notation $O(\ep\nu^{\gamma})$ is the same use of that of \cite{fujisawa2008robust}. Furthermore, from above results, the reference prior and the equation \eqref{matching-eq} are approximately given by
\begin{align}
\begin{split}\label{eq_pri4}
&\pi_R(\bm{\theta}) \propto \det\left(H^{(\gamma)}(\bm{\theta})\right)^{1/2}, \\
&\frac{\partial_\ell \pi(\bm{\theta})}{\pi(\bm{\theta})} + \frac{1}{2}\sum_{i, j}\tilde{g}_{ij\ell}^{(\gamma)}(\bm{\theta}) h^{ij}(\bm{\theta}) = 0,  
\end{split}
\end{align} 
where $H^{(\gamma)}(\bm{\theta}) = (h_{ij}^{(\gamma)}(\theta))$ and $\{H^{(\gamma)}(\bm{\theta})\}^{-1} = (h^{ij}(\bm{\theta}))$. 
\end{thm}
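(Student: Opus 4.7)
The plan is to prove the two asymptotic identities in \eqref{result1} by decomposing $\E_g = (1-\ep)\E_f + \ep\E_\delta$ and controlling the $\delta$-contribution by H\"older's inequality in the spirit of \cite{fujisawa2008robust}. First, I would write
\[
J_{ij}^{(\gamma)}(\bm{\theta}) = -(1-\ep)\int f(x)\, \partial_i\partial_j q^{(\gamma)}(x; \bm{\theta}) \dd x - \ep \int \delta(x)\, \partial_i\partial_j q^{(\gamma)}(x; \bm{\theta}) \dd x,
\]
and an analogous decomposition for $g_{ijk}^{(\gamma)}(\bm{\theta})$ with the third derivative. The $(1-\ep)$-piece is the main term that matches $(1-\ep)h_{ij}^{(\gamma)}(\bm{\theta})$ under the paper's convention that the clean part of the contamination model lies in the family $\{f_{\bm{\theta}}\}$, so $-\int f\,\partial_i\partial_j q^{(\gamma)}\dd x = h_{ij}^{(\gamma)}(\bm{\theta})$; the remaining $\ep$-piece is where all the work happens.

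Second, I would compute the second and third $\bm{\theta}$-derivatives of
\[
q^{(\gamma)}(x;\bm{\theta}) = \frac{1}{\gamma}\,f_{\bm{\theta}}(x)^{\gamma}\, Z(\bm{\theta})^{-\gamma/(1+\gamma)}, \qquad Z(\bm{\theta}) = \int f_{\bm{\theta}}(y)^{1+\gamma}\dd y,
\]
using $\partial_i f_{\bm{\theta}}^{\gamma} = \gamma f_{\bm{\theta}}^{\gamma}\partial_i\log f_{\bm{\theta}}$. The key structural observation is that every term in $\partial_i\partial_j q^{(\gamma)}(x;\bm{\theta})$ (and likewise in $\partial_i\partial_j\partial_k q^{(\gamma)}(x;\bm{\theta})$) factors as $f_{\bm{\theta}}(x)^{\gamma}$ times a polynomial in the log-derivatives $\partial_i\log f_{\bm{\theta}}(x),\,\partial_i\partial_j\log f_{\bm{\theta}}(x),\ldots$, multiplied by $\bm{\theta}$-dependent constants arising from the differentiation of $Z(\bm{\theta})^{-\gamma/(1+\gamma)}$ (which are $x$-free). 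Writing this as $\partial_i\partial_j q^{(\gamma)}(x;\bm{\theta}) = f_{\bm{\theta}}(x)^{\gamma}\Psi_{ij}(x;\bm{\theta})$ isolates the factor that condition \eqref{con_siki1} is designed to tame.

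Third, the error term is bounded by H\"older's inequality with conjugate exponents $r=\gamma_0/\gamma$ and $r'=\gamma_0/(\gamma_0-\gamma)$:
\[
\left|\int \delta(x)\, f_{\bm{\theta}}(x)^{\gamma}\,\Psi_{ij}(x;\bm{\theta}) \dd x\right|
\le \left\{\int \delta(x)\, f_{\bm{\theta}}(x)^{\gamma_0}\dd x\right\}^{\gamma/\gamma_0}
\left\{\int \delta(x)\,|\Psi_{ij}(x;\bm{\theta})|^{r'} \dd x\right\}^{1/r'}.
\]
By \eqref{con_siki1} the first factor equals $\nu_{\bm{\theta}}^{\gamma}\le \nu^{\gamma}$, and the second is $O(1)$ under the standard integrability assumptions already invoked for Theorem \ref{asymp3}. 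The condition $\gamma+1\le \gamma_0$ provides the slack needed when an extra power of $f_{\bm{\theta}}$ (or of the score) is picked up in the third-derivative bookkeeping. Multiplying by $\ep$ yields the $O(\ep\nu^{\gamma})$ remainder in \eqref{result1}; the argument for $g_{ijk}^{(\gamma)}$ is identical up to an additional differentiation.

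Finally, the approximate prior formulas in \eqref{eq_pri4} follow by substitution. Plugging $J^{(\gamma)}(\bm{\theta})\approx (1-\ep)H^{(\gamma)}(\bm{\theta})$ into $\pi_R(\bm{\theta})\propto \det(J^{(\gamma)}(\bm{\theta}))^{1/2}$ from Theorem \ref{reference} gives $\det(J^{(\gamma)}(\bm{\theta}))^{1/2}\approx (1-\ep)^{p/2}\det(H^{(\gamma)}(\bm{\theta}))^{1/2}$, and the $\bm{\theta}$-free factor $(1-\ep)^{p/2}$ is absorbed into the proportionality constant. Substituting $g_{ij\ell}^{(\gamma)}\approx (1-\ep)\tilde{g}_{ij\ell}^{(\gamma)}$ and $J^{ij}(\bm{\theta})\approx (1-\ep)^{-1}h^{ij}(\bm{\theta})$ into the moment-matching equation \eqref{matching-eq} cancels the $(1-\ep)$ factors and yields the displayed condition. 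The main obstacle will be Step 3: making the factorization $\partial^{k} q^{(\gamma)}=f_{\bm{\theta}}^{\gamma}\Psi$ fully explicit (including the contributions from $\partial Z$) and verifying a uniform $L^{r'}(\delta)$ bound on $\Psi$, so that the only $\gamma$-dependent quantity left in the H\"older bound is $\nu^{\gamma}$; everything else is routine.
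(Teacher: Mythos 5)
Your proposal follows essentially the same route as the paper's proof: the same decomposition $\E_g=(1-\ep)\E_{f_{\bm{\theta}}}+\ep\E_\delta$, the same structural observation that every term of $\partial_i\partial_j q^{(\gamma)}$ and $\partial_i\partial_j\partial_k q^{(\gamma)}$ carries a factor $f_{\bm{\theta}}(x)^{\gamma}$ times $x$-free constants and log-derivatives (made explicit in the paper's Appendix), and the same H\"older-type bound reducing the contamination integral to $O(\nu^{\gamma})$ before substituting into the reference-prior and moment-matching formulas. The only cosmetic difference is the choice of conjugate exponents: the paper pairs $(1+\gamma)/\gamma$ with $1+\gamma$ and then invokes Lyapunov's inequality to pass from the $(\gamma+1)$-power of $f_{\bm{\theta}}$ to the $\gamma_0$-power (which is precisely where the hypothesis $\gamma+1\le\gamma_0$ enters), whereas you pair $\gamma_0/\gamma$ with $\gamma_0/(\gamma_0-\gamma)$ and hit $\nu_{\bm{\theta}}^{\gamma}$ directly.
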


\begin{proof}
Put $\ell(x) = \log f_{\bm{\theta}}(x)$, $\ell_i(x) = \partial_i\log f_{\bm{\theta}}(x)$, $\ell_{ij}(x) = \partial_i\partial_j\log f_{\bm{\theta}}(x)$ and $\ell_{ijk}(x) = \partial_i\partial_j\partial_k\log f_{\bm{\theta}}(x)$. First, from H\"older's inequality and Lyapunonv's inequality, it holds that 
\begin{align}
\begin{split}\label{siki-apen}
&\int_{\Omega} \left|\delta(x)f_{\bm{\theta}}(x)^{\gamma}\ell_i(x)\right| \dd x \leq \nu^{\gamma}\left(\int_{\Omega} |\ell_i(x)|^{1+\gamma}\delta(x)\dd x\right)^{1/(1+\gamma)}, \\
&\int_{\Omega} \left|\delta(x)f_{\bm{\theta}}(x)^{\gamma}\ell_i(x)\ell_j(x)\right| \dd x \leq \nu^{\gamma}\left(\int_{\Omega} |\ell_i(x)\ell_j(x)|^{1+\gamma}\delta(x)\dd x\right)^{1/(1+\gamma)}, \\
&\int_{\Omega} \left|\delta(x)f_{\bm{\theta}}(x)^{\gamma}\ell_{ij}(x)\right|\dd x \leq \nu^{\gamma}\left(\int_{\Omega} |\ell_{ij}(x)|^{1+\gamma}\delta(x)\dd x\right)^{1/(1+\gamma)}, \\
&\int_{\Omega}\left|\delta(x)f_{\bm{\theta}}(x)^{\gamma}\ell_{ijk}(x)\right| \dd x\leq \nu^{\gamma}\left(\int_{\Omega} |\ell_{ijk}(x)|^{1+\gamma}\delta(x)\dd x\right)^{1/(1+\gamma)}, \\
& \int_{\Omega} \left|\delta(x)f_{\bm{\theta}}(x)^{\gamma}\ell_{ij}(x)\ell_k(x)\right| \dd x \leq \nu^{\gamma}\left(\int_{\Omega} |\ell_{ij}(x)\ell_k(x)|^{1+\gamma}\delta(x)\dd x\right)^{1/(1+\gamma)}, \\
&\int_{\Omega} \left|\delta(x)f_{\bm{\theta}}(x)^{\gamma}\ell_i(X)\ell_j(X_1)\ell_k(X_1)\right|\dd x \leq \nu^{\gamma}\left(\int_{\Omega} |\ell_i(x)\ell_j(x)\ell_k(x)|^{1+\gamma}\delta(x)\dd x\right)^{1/(1+\gamma)}
\end{split}
\end{align}
for $i, j, k = 1, \ldots, p$. Using (\ref{siki-apen}) and the results in Appendix, we have
\begin{align*}
\left| \int_{\Omega} \delta(x)\partial_i\partial_j q^{(\gamma)}(x; \bm{\theta})\dd x \right|  \leq &\|f_{\bm{\theta}}\|_{\gamma+1}^{-\gamma} \gamma\int_{\Omega} \left|\delta(x)f_{\bm{\theta}}(x)^{\gamma}\ell_i(x)\ell_j(x) \right| \dd x\\
&  + \|f_{\bm{\theta}}\|_{\gamma+1}^{-\gamma} \int_{\Omega} \left|\delta(x)f_{\bm{\theta}}(x)^{\gamma}\ell_{ij}(x) \right| \dd x \\
& + \gamma  S_i\|f_{\bm{\theta}}\|_{\gamma+1}^{-1-2\gamma} \int_{\Omega} \left|\delta(x)f_{\bm{\theta}}(x)^{\gamma}\ell_{j}(x) \right| \dd x\\
&  +  \gamma  S_j\|f_{\bm{\theta}}\|_{\gamma+1}^{-1-2\gamma}\int_{\Omega} \left|\delta(x)f_{\bm{\theta}}(x)^{\gamma}\ell_{i}(x) \right| \dd x  \\
& + \frac{(1+2\gamma)}{ \|f_{\bm{\theta}}\|_{\gamma+1}^{2+3\gamma}}S_iS_j\int_{\Omega} \left|\delta(x)f_{\bm{\theta}}(x)^{\gamma}\right| \dd x\\
& + \|f_{\bm{\theta}}\|_{\gamma+1}^{-1-2\gamma}\int_{\Omega} \left|\delta(x)f_{\bm{\theta}}(x)^{\gamma}\right| \dd x\int_{\Omega} f_{\bm{\theta}}(y)^{\gamma+1}s_{ij}(y)\dd y,\\
= & O(\nu^{\gamma}), \\
\end{align*}
where 
\begin{align*}
s_{ij}(y) = &(\gamma+1)\ell_{i}(y)\ell_{j}(y) + \ell_{ij}(y), \\ 
S_{i} = & \int_{\Omega}f_{\bm{\theta}}(y)^{\gamma+1}\ell_{i}(y) \dd y
\end{align*}
for $i, j = 1, \ldots, p$. Similarly, it also holds that
\begin{align*}
&\int \delta(x) \partial_i\partial_j\partial_k q^{(\gamma)}(x; \bm{\theta})\dd x= O(\nu^{\gamma})
\end{align*} 
for $i, j, k = 1, \ldots, p$. Since,  
\begin{align*}
J_{ij}^{(\gamma)}(\bm{\theta}) =& -\E_{g}\left[\partial_i\partial_j q^{(\gamma)}(X_1; \bm{\theta})\right]= -(1-\ep)h_{ij}^{(\gamma)}(\bm{\theta})  - \ep \int_{\Omega} \delta(x)\partial_i\partial_j q^{(\gamma)}(x; \bm{\theta})\dd x,\\
g_{ijk}^{(\gamma)}(\bm{\theta}) =& \E_{g}\left[\partial_i\partial_j\partial_k q^{(\gamma)}(X_1; \bm{\theta})\right]= (1-\ep)\tilde{g}_{ijk}^{(\gamma)}(\bm{\theta}) + \ep\int \delta(x) \partial_i\partial_j\partial_k q^{(\gamma)}(x; \bm{\theta})\dd x,
\end{align*}
the proof of \eqref{result1} is complete. It is also easily to see the result \eqref{eq_pri4} from \eqref{result1}.  
\end{proof}

 It should be noted that \eqref{result1} looks like results to Theorem 5.1 in \cite{fujisawa2008robust}. However, $q^{(\gamma)}(x; \bm{\theta})$ and its derivative functions are different formulae from that of \cite{fujisawa2008robust}, so that  the derivative functions and the proof of \eqref{result1} are given in Appendix. Theorem \ref{pri_rob} shows that expectations in the right-hand side of $J_{ij}^{(\gamma)}(\bm{\theta})$ and $g_{ijk}^{(\gamma)}(\bm{\theta})$ only depend on the underlying model $f_{\bm{\theta}}$, but do not depend on the contamination distribution. Furthermore, reference and moment matching priors for the $\gamma$-posterior are obtained by the parametric model $f_{\bm{\theta}}$, that is, these do not depend on the contamination ratio and the contamination distribution. For example, for a normal distribution $N(\mu, \sigma^2)$, reference and moment matching priors are given by 
\begin{align}
\begin{split}\label{gam_pri}
&\pi_{R}^{(\gamma)}(\mu, \sigma) = \sigma^{-3 + 1/(1+\gamma)} + O(\ep\nu^{\gamma}), \\
& \pi_{M}^{(\gamma)}(\mu, \sigma) = \sigma^{-(\gamma+7)/\{2(1+\gamma)\}}+ O(\ep\nu^{\gamma}). 
\end{split}
\end{align}
However, reference and moment matching priors under $R^{(\alpha)}$-posterior depend on unknown quantities in the data generating distribution unless $\varepsilon \approx0$, since $J_{ij}^{(\alpha)}(\bm{\theta})$ and $g_{ijk}^{(\alpha)}(\bm{\theta})$ have the following forms:
\begin{align*}
J_{ij}^{(\alpha)}(\bm{\theta}) =& -\E_{g}\left[\partial_i\partial_j q^{(\alpha)}(X_1; \bm{\theta})\right] \\
=& -(1-\ep)\E_{f_{\bm{\theta}}}\left[\partial_i\partial_j q^{(\alpha)}(X_1; \bm{\theta})\right] - \frac{\ep}{1+\alpha}\int_{\Omega}\partial_i\partial_jf_{\bm{\theta}}(x)^{1+\alpha} \dd x + O(\ep\nu^{\alpha}),\\
g_{ijk}^{(\alpha)}(\bm{\theta}) = &\E_{g}\left[\partial_i\partial_j\partial_k q^{(\alpha)}(X_1; \bm{\theta})\right]\\
 =& (1-\ep)\E_{f_{\bm{\theta}}}\left[\partial_i\partial_j\partial_k q^{(\alpha)}(X_1; \bm{\theta})\right]+  \frac{\ep}{1+\alpha}\int_{\Omega}\partial_i\partial_j\partial_k f_{\bm{\theta}}(x)^{1+\alpha} \dd x+ O(\ep\nu^{\alpha}), 
\end{align*}
where 
\begin{align*}
q^{(\alpha)}(x; \bm{\theta}) := q^{(d_{\alpha})}(x; \bm{\theta}) = \frac{1}{\alpha}f_{\bm{\theta}}(x)^{\alpha} - \frac{1}{1+\alpha}\int_{\Omega}f_{\bm{\theta}}(y)^{1+\alpha} \dd y. 
\end{align*}
The priors given by \eqref{gam_pri} can be practically used under the condition \eqref{con_siki1} even if the contamination ratio $\varepsilon$ is not small.

\section{Simulation studies}
\label{sec:sim}

\subsection{Setting and results}

We present performance of posterior means under reference and moment matching priors through some simulation studies. In this section, we assume that the parametric model is the normal distribution with mean $\mu$ and variance $\sigma^2$, and consider the joint estimation problem for $\mu$ and $\sigma^2$. We assume that the true values of $\mu$ and $\sigma^2$ are $0$ and $1$, respectively. We also assume that the contamination distribution is the normal distribution with mean $\nu$ and variance $1$. In other words, the data generating distribution is expressed by
\[g(x)= (1-\varepsilon) N(0,1) + \varepsilon N(\nu,1),\]
where $\varepsilon$ is the contamination ratio and $n$ is the sample size. 
 We compare the performances of estimators in terms of empirical bias and mean squared error (MSE) among three methods, which include the ordinary KL divergence-based posterior, $R^{(\alpha)}$-posterior and $\gamma$-posterior (our proposal). We also employ three prior distributions for $(\mu,\sigma)$, namely, (i) uniform prior, (ii) reference prior, and (iii) moment matching prior. 

Since exact calculations of posterior means are not easy, we use the importance sampling Monte Carlo algorithm using the proposal distributions $N(\bar{x}, s^2)$ for $\mu$ and ${\rm IG}(6, 5s)$ for $\sigma$ (the inverse gamma distribution with parameters $a$ and b  is denoted by $\mathrm{IG}(a,b)$), where $\bar{x} = n^{-1}\sum_{i = 1}^{n} x_i$ and $s^2 = (n-1)^{-1}\sum_{i = 1}^{n} (x_i -\bar{x})^2$ (for details of the importance sampling, see e.g. \cite{robert2004monte}). We carry out the importance sampling with 10,000 steps and we compute empirical bias and MSE for posterior means $(\hat{\mu},\hat{\sigma})$ of $(\mu,\sigma)$ by 10,000 iterations. The simulation results are reported in Tables  \ref{sim:bias_mu} to \ref{sim:mse_var}. Reference and the moment matching priors for the $\gamma$-posterior are given by (\ref{gam_pri}), and those of $R^{(\alpha)}$-posterior are ``formally" given as follows: 
\begin{align}\label{power_pri}
\pi_{M}^{(\alpha)}(\mu, \sigma) \propto \sigma^{-2-\alpha}, \ \ \pi_{M}^{(\alpha)}(\mu, \sigma) \propto \sigma^{C_M/2}, 
\end{align}
where $C_M$ is a constant given by
\begin{align*}
C_M = -\frac{2+\alpha^2}{(1+\alpha)} + \frac{\alpha(1+\alpha)^3(2+ \alpha) + (10 -\alpha^2(-2+ \alpha(5 + \alpha(3 + \alpha))))\pi^{\alpha/2}}{(1+\alpha)(-\alpha(1+\alpha)^2 + (-2 + \alpha + \alpha^2 + \alpha^3)\pi^{\alpha/2})}. 
\end{align*}
The term ``formally" means that since reference and the moment matching priors for $R^{(\alpha)}$-posterior strictly depend on an unknown contamination ratio and contamination distribution, we set $\varepsilon= 0$ in these priors. On the other hand, our proposed objective priors do not need such an assumption, but we assume only the condition \eqref{con_siki1}. We note that \cite{giummole2019objective} also use the same formal reference prior in their simulation studies.


The simulation results of empirical bias and MSE of posterior means of $\mu$ and $\sigma$ are provided by Tables \ref{sim:bias_mu} to \ref{sim:mse_var}. We consider three prior distributions for $(\mu,\sigma)$, namely, uniform, reference and moment matching priors. In these tables, we set $\nu = 6$, $\ep = 0.00, 0.05, 0.20$ and $n=20, 50, 100$. We also set the tuning parameters for $R^{(\alpha)}$- and $\gamma$-posteriors as $0.2, 0.3, 0.5, 0.7$. 

Tables \ref{sim:bias_mu} and \ref{sim:mse_mu} show empirical bias and MSE of posterior means of mean parameter $\mu$ based on standard posterior, $R^{(\alpha)}$- and $\gamma$-posteriors. Empirical bias and MSE for two robust methods are smaller than those of standard  posterior mean (denoted by ``Bayes" in Tables \ref{sim:bias_mu} to \ref{sim:mse_var}) in the presence of outliers for a large sample size. When there is no outliers ($\ep=0$), it seems that three methods are comparable. On the other hand, when $\ep=0.05$ and $\ep=0.20$, the standard posterior mean get worse, while the performances of posterior means based on $R^{(\alpha)}$-posterior and $\gamma$-posterior are comparable for both empirical bias and MSE. 



We also presented the results of the estimation for variance parameter $\sigma$ in Tables \ref{sim:bias_var} and \ref{sim:mse_var}. When there is no outliers, the performances of robust Bayes estimators under uniform prior are slightly worse. On the other hand, reference and moment matching priors provide relatively reasonable results even if the sample size is small and $\ep=0$. Empirical bias and MSE of $R^{(\alpha)}$-posterior and $\gamma$-posterior means for $\alpha, \gamma = 0.5, 0.7$ remain small even if the contamination ratio $\ep$ is not small. In particular, empirical bias and MSE of $\gamma$-posterior means for $\sigma$ are shown to be drastically smaller than those of $R^{(\alpha)}$-posterior.






\begin{table}[!tbp]
\caption{Empirical biases of posterior means for $\mu$\label{sim:bias_mu}} 
\begin{center}
\resizebox{1\textwidth}{!}{
\begin{tabular}{rrcrcrrrrcrrrr}
\toprule
\multicolumn{2}{c}{\bfseries }&\multicolumn{1}{c}{\bfseries }&\multicolumn{1}{c}{\bfseries Bayes}&\multicolumn{1}{c}{\bfseries }&\multicolumn{4}{c}{\bfseries $R^{(\alpha)}$-posterior}&\multicolumn{1}{c}{\bfseries }&\multicolumn{4}{c}{\bfseries $\gamma$-posterior}\tabularnewline
\cline{4-4} \cline{6-9} \cline{11-14}
\multicolumn{1}{c}{$\varepsilon$}&\multicolumn{1}{c}{$n$}&\multicolumn{1}{c}{}&\multicolumn{1}{c}{$\alpha,\gamma \to 0.0$}&\multicolumn{1}{c}{}&\multicolumn{1}{c}{$\alpha=0.2$}&\multicolumn{1}{c}{$\alpha=0.3$}&\multicolumn{1}{c}{$\alpha=0.5$}&\multicolumn{1}{c}{$\alpha=0.7$}&\multicolumn{1}{c}{}&\multicolumn{1}{c}{$\gamma=0.2$}&\multicolumn{1}{c}{$\gamma=0.3$}&\multicolumn{1}{c}{$\gamma=0.5$}&\multicolumn{1}{c}{$\gamma=0.7$}\tabularnewline
\midrule
\multicolumn{14}{l}{\bfseries Uniform prior}\tabularnewline
$0.00$&$ 20$&&$-0.002$&&$-0.003$&$-0.003$&$-0.002$&$ 0.001$&&$-0.003$&$-0.003$&$-0.003$&$-0.002$\tabularnewline
$0.00$&$ 50$&&$-0.002$&&$-0.001$&$-0.001$&$-0.001$&$ 0.000$&&$-0.001$&$-0.001$&$-0.001$&$ 0.000$\tabularnewline
$0.00$&$100$&&$ 0.000$&&$ 0.000$&$ 0.000$&$ 0.000$&$ 0.001$&&$ 0.000$&$ 0.000$&$ 0.000$&$ 0.001$\tabularnewline
$0.05$&$ 20$&&$ 0.298$&&$ 0.109$&$ 0.075$&$ 0.098$&$ 0.172$&&$ 0.104$&$ 0.064$&$ 0.046$&$ 0.060$\tabularnewline
$0.05$&$ 50$&&$ 0.301$&&$ 0.053$&$ 0.020$&$ 0.009$&$ 0.016$&&$ 0.051$&$ 0.017$&$ 0.004$&$ 0.002$\tabularnewline
$0.05$&$100$&&$ 0.301$&&$ 0.038$&$ 0.012$&$ 0.004$&$ 0.002$&&$ 0.036$&$ 0.011$&$ 0.003$&$ 0.001$\tabularnewline
$0.20$&$ 20$&&$ 1.192$&&$ 0.917$&$ 0.800$&$ 0.815$&$ 0.973$&&$ 0.908$&$ 0.755$&$ 0.596$&$ 0.615$\tabularnewline
$0.20$&$ 50$&&$ 1.198$&&$ 0.869$&$ 0.638$&$ 0.362$&$ 0.478$&&$ 0.864$&$ 0.600$&$ 0.215$&$ 0.112$\tabularnewline
$0.20$&$100$&&$ 1.201$&&$ 0.862$&$ 0.578$&$ 0.158$&$ 0.108$&&$ 0.859$&$ 0.537$&$ 0.065$&$ 0.015$\tabularnewline[10pt]
\multicolumn{14}{l}{\bfseries Reference prior}\tabularnewline
$0.00$&$ 20$&&$-0.002$&&$-0.003$&$-0.004$&$-0.004$&$-0.003$&&$-0.003$&$-0.004$&$-0.004$&$-0.004$\tabularnewline
$0.00$&$ 50$&&$-0.002$&&$-0.001$&$-0.001$&$-0.001$&$ 0.000$&&$-0.001$&$-0.001$&$-0.001$&$ 0.000$\tabularnewline
$0.00$&$100$&&$ 0.000$&&$ 0.000$&$ 0.000$&$ 0.000$&$ 0.001$&&$ 0.000$&$ 0.000$&$ 0.000$&$ 0.001$\tabularnewline
$0.05$&$ 20$&&$ 0.298$&&$ 0.072$&$ 0.033$&$ 0.016$&$ 0.018$&&$ 0.070$&$ 0.030$&$ 0.010$&$ 0.006$\tabularnewline
$0.05$&$ 50$&&$ 0.301$&&$ 0.041$&$ 0.013$&$ 0.002$&$ 0.001$&&$ 0.040$&$ 0.011$&$ 0.001$&$-0.001$\tabularnewline
$0.05$&$100$&&$ 0.301$&&$ 0.033$&$ 0.010$&$ 0.003$&$ 0.001$&&$ 0.032$&$ 0.009$&$ 0.002$&$ 0.001$\tabularnewline
$0.20$&$ 20$&&$ 1.192$&&$ 0.808$&$ 0.558$&$ 0.295$&$ 0.293$&&$ 0.803$&$ 0.537$&$ 0.227$&$ 0.152$\tabularnewline
$0.20$&$ 50$&&$ 1.198$&&$ 0.820$&$ 0.504$&$ 0.143$&$ 0.079$&&$ 0.817$&$ 0.473$&$ 0.085$&$ 0.023$\tabularnewline
$0.20$&$100$&&$ 1.201$&&$ 0.838$&$ 0.495$&$ 0.071$&$ 0.027$&&$ 0.836$&$ 0.457$&$ 0.029$&$ 0.006$\tabularnewline[10pt]
\multicolumn{14}{l}{\bfseries Moment Matching prior}\tabularnewline
$0.00$&$ 20$&&$-0.002$&&$-0.003$&$-0.004$&$-0.004$&$-0.003$&&$-0.003$&$-0.004$&$-0.004$&$-0.004$\tabularnewline
$0.00$&$ 50$&&$-0.002$&&$-0.001$&$-0.001$&$-0.001$&$ 0.000$&&$-0.001$&$-0.001$&$-0.001$&$ 0.000$\tabularnewline
$0.00$&$100$&&$ 0.000$&&$ 0.000$&$ 0.000$&$ 0.001$&$ 0.001$&&$ 0.000$&$ 0.000$&$ 0.000$&$ 0.001$\tabularnewline
$0.05$&$ 20$&&$ 0.298$&&$ 0.059$&$ 0.025$&$ 0.010$&$ 0.008$&&$ 0.059$&$ 0.024$&$ 0.009$&$ 0.007$\tabularnewline
$0.05$&$ 50$&&$ 0.301$&&$ 0.037$&$ 0.011$&$ 0.002$&$-0.001$&&$ 0.036$&$ 0.010$&$ 0.001$&$-0.001$\tabularnewline
$0.05$&$100$&&$ 0.301$&&$ 0.031$&$ 0.009$&$ 0.002$&$ 0.001$&&$ 0.030$&$ 0.009$&$ 0.002$&$ 0.001$\tabularnewline
$0.20$&$ 20$&&$ 1.192$&&$ 0.759$&$ 0.486$&$ 0.220$&$ 0.196$&&$ 0.759$&$ 0.481$&$ 0.210$&$ 0.165$\tabularnewline
$0.20$&$ 50$&&$ 1.198$&&$ 0.799$&$ 0.462$&$ 0.111$&$ 0.043$&&$ 0.797$&$ 0.441$&$ 0.079$&$ 0.025$\tabularnewline
$0.20$&$100$&&$ 1.201$&&$ 0.828$&$ 0.468$&$ 0.058$&$ 0.018$&&$ 0.827$&$ 0.435$&$ 0.028$&$ 0.006$\tabularnewline
\bottomrule
\end{tabular}
}
\end{center}
\end{table}

\begin{table}[!tbp]
\caption{Empirical biases of posterior means for $\sigma$\label{sim:bias_var}} 
\begin{center}
\resizebox{1\textwidth}{!}{
\begin{tabular}{rrcrcrrrrcrrrr}
\toprule
\multicolumn{2}{c}{\bfseries }&\multicolumn{1}{c}{\bfseries }&\multicolumn{1}{c}{\bfseries Bayes}&\multicolumn{1}{c}{\bfseries }&\multicolumn{4}{c}{\bfseries $R^{(\alpha)}$-posterior}&\multicolumn{1}{c}{\bfseries }&\multicolumn{4}{c}{\bfseries $\gamma$-posterior}\tabularnewline
\cline{4-4} \cline{6-9} \cline{11-14}
\multicolumn{1}{c}{$\varepsilon$}&\multicolumn{1}{c}{$n$}&\multicolumn{1}{c}{}&\multicolumn{1}{c}{$\alpha,\gamma \to 0.0$}&\multicolumn{1}{c}{}&\multicolumn{1}{c}{$\alpha=0.2$}&\multicolumn{1}{c}{$\alpha=0.3$}&\multicolumn{1}{c}{$\alpha=0.5$}&\multicolumn{1}{c}{$\alpha=0.7$}&\multicolumn{1}{c}{}&\multicolumn{1}{c}{$\gamma=0.2$}&\multicolumn{1}{c}{$\gamma=0.3$}&\multicolumn{1}{c}{$\gamma=0.5$}&\multicolumn{1}{c}{$\gamma=0.7$}\tabularnewline
\midrule
\multicolumn{14}{l}{\bfseries Uniform prior}\tabularnewline
$0.00$&$ 20$&&$ 0.058$&&$ 0.148$&$ 0.225$&$ 0.733$&$ 2.089$&&$ 0.136$&$ 0.184$&$ 0.330$&$ 0.620$\tabularnewline
$0.00$&$ 50$&&$ 0.022$&&$ 0.049$&$ 0.067$&$ 0.122$&$ 0.263$&&$ 0.046$&$ 0.058$&$ 0.085$&$ 0.116$\tabularnewline
$0.00$&$100$&&$ 0.011$&&$ 0.024$&$ 0.031$&$ 0.053$&$ 0.088$&&$ 0.022$&$ 0.028$&$ 0.039$&$ 0.051$\tabularnewline
$0.05$&$ 20$&&$ 0.669$&&$ 0.438$&$ 0.476$&$ 1.620$&$ 4.335$&&$ 0.404$&$ 0.370$&$ 0.540$&$ 1.109$\tabularnewline
$0.05$&$ 50$&&$ 0.660$&&$ 0.203$&$ 0.144$&$ 0.188$&$ 0.475$&&$ 0.189$&$ 0.116$&$ 0.110$&$ 0.139$\tabularnewline
$0.05$&$100$&&$ 0.652$&&$ 0.134$&$ 0.078$&$ 0.087$&$ 0.135$&&$ 0.123$&$ 0.061$&$ 0.049$&$ 0.058$\tabularnewline
$0.20$&$ 20$&&$ 1.732$&&$ 1.848$&$ 2.086$&$ 5.500$&$ 9.627$&&$ 1.769$&$ 1.727$&$ 2.207$&$ 3.833$\tabularnewline
$0.20$&$ 50$&&$ 1.653$&&$ 1.558$&$ 1.304$&$ 1.098$&$ 3.158$&&$ 1.533$&$ 1.182$&$ 0.573$&$ 0.454$\tabularnewline
$0.20$&$100$&&$ 1.626$&&$ 1.508$&$ 1.151$&$ 0.506$&$ 0.563$&&$ 1.495$&$ 1.042$&$ 0.198$&$ 0.113$\tabularnewline[10pt]
\multicolumn{14}{l}{\bfseries Reference prior}\tabularnewline
$0.00$&$ 20$&&$-0.001$&&$ 0.009$&$ 0.006$&$-0.007$&$-0.013$&&$ 0.007$&$-0.001$&$-0.041$&$-0.117$\tabularnewline
$0.00$&$ 50$&&$ 0.000$&&$ 0.003$&$ 0.002$&$-0.004$&$-0.010$&&$ 0.003$&$ 0.000$&$-0.012$&$-0.036$\tabularnewline
$0.00$&$100$&&$ 0.000$&&$ 0.002$&$ 0.001$&$-0.002$&$-0.006$&&$ 0.002$&$ 0.000$&$-0.005$&$-0.016$\tabularnewline
$0.05$&$ 20$&&$ 0.576$&&$ 0.173$&$ 0.093$&$ 0.066$&$ 0.097$&&$ 0.161$&$ 0.069$&$ 0.000$&$-0.051$\tabularnewline
$0.05$&$ 50$&&$ 0.625$&&$ 0.119$&$ 0.050$&$ 0.028$&$ 0.029$&&$ 0.110$&$ 0.035$&$-0.003$&$-0.030$\tabularnewline
$0.05$&$100$&&$ 0.635$&&$ 0.096$&$ 0.039$&$ 0.024$&$ 0.026$&&$ 0.088$&$ 0.026$&$ 0.000$&$-0.014$\tabularnewline
$0.20$&$ 20$&&$ 1.580$&&$ 1.281$&$ 0.954$&$ 0.659$&$ 0.697$&&$ 1.258$&$ 0.877$&$ 0.427$&$ 0.303$\tabularnewline
$0.20$&$ 50$&&$ 1.598$&&$ 1.367$&$ 0.917$&$ 0.375$&$ 0.324$&&$ 1.354$&$ 0.832$&$ 0.181$&$ 0.071$\tabularnewline
$0.20$&$100$&&$ 1.599$&&$ 1.421$&$ 0.937$&$ 0.241$&$ 0.196$&&$ 1.413$&$ 0.839$&$ 0.068$&$ 0.014$\tabularnewline[10pt]
\multicolumn{14}{l}{\bfseries Moment Matching prior}\tabularnewline
$0.00$&$ 20$&&$-0.039$&&$-0.036$&$-0.044$&$-0.083$&$-0.186$&&$-0.034$&$-0.039$&$-0.061$&$-0.090$\tabularnewline
$0.00$&$ 50$&&$-0.015$&&$-0.014$&$-0.016$&$-0.029$&$-0.067$&&$-0.013$&$-0.014$&$-0.019$&$-0.027$\tabularnewline
$0.00$&$100$&&$-0.007$&&$-0.006$&$-0.007$&$-0.014$&$-0.032$&&$-0.006$&$-0.006$&$-0.008$&$-0.012$\tabularnewline
$0.05$&$ 20$&&$ 0.516$&&$ 0.093$&$ 0.021$&$-0.029$&$-0.113$&&$ 0.089$&$ 0.016$&$-0.021$&$-0.023$\tabularnewline
$0.05$&$ 50$&&$ 0.601$&&$ 0.089$&$ 0.026$&$-0.002$&$-0.037$&&$ 0.083$&$ 0.017$&$-0.011$&$-0.021$\tabularnewline
$0.05$&$100$&&$ 0.623$&&$ 0.082$&$ 0.027$&$ 0.010$&$-0.005$&&$ 0.075$&$ 0.017$&$-0.003$&$-0.010$\tabularnewline
$0.20$&$ 20$&&$ 1.481$&&$ 1.097$&$ 0.736$&$ 0.395$&$ 0.225$&&$ 1.094$&$ 0.717$&$ 0.373$&$ 0.361$\tabularnewline
$0.20$&$ 50$&&$ 1.559$&&$ 1.293$&$ 0.808$&$ 0.276$&$ 0.165$&&$ 1.287$&$ 0.748$&$ 0.162$&$ 0.084$\tabularnewline
$0.20$&$100$&&$ 1.579$&&$ 1.386$&$ 0.872$&$ 0.197$&$ 0.135$&&$ 1.381$&$ 0.787$&$ 0.061$&$ 0.019$\tabularnewline
\bottomrule
\end{tabular}
}
\end{center}
\end{table}

\begin{table}[!tbp]
\caption{Empirical MSEs of posterior means for $\mu$\label{sim:mse_mu}} 
\begin{center}
\resizebox{1\textwidth}{!}{
\begin{tabular}{rrcrcrrrrcrrrr}
\toprule
\multicolumn{2}{c}{\bfseries }&\multicolumn{1}{c}{\bfseries }&\multicolumn{1}{c}{\bfseries Bayes}&\multicolumn{1}{c}{\bfseries }&\multicolumn{4}{c}{\bfseries $R^{(\alpha)}$-posterior}&\multicolumn{1}{c}{\bfseries }&\multicolumn{4}{c}{\bfseries $\gamma$-posterior}\tabularnewline
\cline{4-4} \cline{6-9} \cline{11-14}
\multicolumn{1}{c}{$\varepsilon$}&\multicolumn{1}{c}{$n$}&\multicolumn{1}{c}{}&\multicolumn{1}{c}{$\alpha,\gamma \to 0.0$}&\multicolumn{1}{c}{}&\multicolumn{1}{c}{$\alpha=0.2$}&\multicolumn{1}{c}{$\alpha=0.3$}&\multicolumn{1}{c}{$\alpha=0.5$}&\multicolumn{1}{c}{$\alpha=0.7$}&\multicolumn{1}{c}{}&\multicolumn{1}{c}{$\gamma=0.2$}&\multicolumn{1}{c}{$\gamma=0.3$}&\multicolumn{1}{c}{$\gamma=0.5$}&\multicolumn{1}{c}{$\gamma=0.7$}\tabularnewline
\midrule
\multicolumn{14}{l}{\bfseries Uniform prior}\tabularnewline
$0.00$&$ 20$&&$0.050$&&$0.051$&$0.053$&$0.090$&$0.282$&&$0.051$&$0.053$&$0.057$&$0.078$\tabularnewline
$0.00$&$ 50$&&$0.020$&&$0.021$&$0.022$&$0.023$&$0.027$&&$0.021$&$0.022$&$0.023$&$0.025$\tabularnewline
$0.00$&$100$&&$0.010$&&$0.010$&$0.011$&$0.012$&$0.013$&&$0.010$&$0.011$&$0.012$&$0.013$\tabularnewline
$0.05$&$ 20$&&$0.223$&&$0.098$&$0.081$&$0.280$&$1.081$&&$0.096$&$0.075$&$0.076$&$0.159$\tabularnewline
$0.05$&$ 50$&&$0.144$&&$0.031$&$0.025$&$0.025$&$0.039$&&$0.031$&$0.025$&$0.025$&$0.027$\tabularnewline
$0.05$&$100$&&$0.118$&&$0.015$&$0.012$&$0.013$&$0.013$&&$0.014$&$0.012$&$0.013$&$0.014$\tabularnewline
$0.20$&$ 20$&&$1.761$&&$1.267$&$1.127$&$2.296$&$4.781$&&$1.254$&$1.031$&$0.906$&$1.402$\tabularnewline
$0.20$&$ 50$&&$1.571$&&$0.950$&$0.647$&$0.311$&$0.879$&&$0.944$&$0.613$&$0.188$&$0.088$\tabularnewline
$0.20$&$100$&&$1.509$&&$0.844$&$0.494$&$0.095$&$0.052$&&$0.840$&$0.463$&$0.046$&$0.019$\tabularnewline[10pt]
\multicolumn{14}{l}{\bfseries Reference prior}\tabularnewline
$0.00$&$ 20$&&$0.050$&&$0.052$&$0.054$&$0.062$&$0.077$&&$0.052$&$0.054$&$0.063$&$0.076$\tabularnewline
$0.00$&$ 50$&&$0.020$&&$0.021$&$0.022$&$0.024$&$0.027$&&$0.021$&$0.022$&$0.024$&$0.028$\tabularnewline
$0.00$&$100$&&$0.010$&&$0.010$&$0.011$&$0.012$&$0.013$&&$0.010$&$0.011$&$0.012$&$0.014$\tabularnewline
$0.05$&$ 20$&&$0.223$&&$0.080$&$0.065$&$0.067$&$0.086$&&$0.080$&$0.064$&$0.077$&$0.066$\tabularnewline
$0.05$&$ 50$&&$0.144$&&$0.028$&$0.024$&$0.026$&$0.028$&&$0.028$&$0.024$&$0.030$&$0.026$\tabularnewline
$0.05$&$100$&&$0.118$&&$0.014$&$0.012$&$0.013$&$0.014$&&$0.014$&$0.012$&$0.015$&$0.013$\tabularnewline
$0.20$&$ 20$&&$1.761$&&$1.106$&$0.744$&$0.385$&$0.564$&&$1.104$&$0.727$&$0.304$&$0.280$\tabularnewline
$0.20$&$ 50$&&$1.571$&&$0.881$&$0.497$&$0.111$&$0.057$&&$0.879$&$0.477$&$0.082$&$0.042$\tabularnewline
$0.20$&$100$&&$1.509$&&$0.809$&$0.410$&$0.041$&$0.020$&&$0.807$&$0.385$&$0.026$&$0.019$\tabularnewline[10pt]
\multicolumn{14}{l}{\bfseries Moment Matching prior}\tabularnewline
$0.00$&$ 20$&&$0.050$&&$0.052$&$0.055$&$0.064$&$0.080$&&$0.052$&$0.055$&$0.063$&$0.074$\tabularnewline
$0.00$&$ 50$&&$0.020$&&$0.021$&$0.022$&$0.025$&$0.028$&&$0.021$&$0.022$&$0.025$&$0.028$\tabularnewline
$0.00$&$100$&&$0.010$&&$0.010$&$0.011$&$0.012$&$0.014$&&$0.010$&$0.011$&$0.012$&$0.014$\tabularnewline
$0.05$&$ 20$&&$0.223$&&$0.075$&$0.063$&$0.067$&$0.085$&&$0.075$&$0.063$&$0.067$&$0.076$\tabularnewline
$0.05$&$ 50$&&$0.144$&&$0.028$&$0.024$&$0.026$&$0.030$&&$0.028$&$0.024$&$0.026$&$0.029$\tabularnewline
$0.05$&$100$&&$0.118$&&$0.014$&$0.012$&$0.013$&$0.014$&&$0.014$&$0.012$&$0.013$&$0.015$\tabularnewline
$0.20$&$ 20$&&$1.761$&&$1.039$&$0.648$&$0.295$&$0.394$&&$1.043$&$0.655$&$0.286$&$0.290$\tabularnewline
$0.20$&$ 50$&&$1.571$&&$0.852$&$0.453$&$0.088$&$0.044$&&$0.853$&$0.443$&$0.078$&$0.043$\tabularnewline
$0.20$&$100$&&$1.509$&&$0.794$&$0.385$&$0.034$&$0.018$&&$0.794$&$0.365$&$0.025$&$0.018$\tabularnewline
\bottomrule
\end{tabular}
}
\end{center}
\end{table}

\begin{table}[!tbp]
\caption{Empirical MSEs of posterior means for $\sigma$\label{sim:mse_var}} 
\begin{center}
\resizebox{1\textwidth}{!}{
\begin{tabular}{rrcrcrrrrcrrrr}
\toprule
\multicolumn{2}{c}{\bfseries }&\multicolumn{1}{c}{\bfseries }&\multicolumn{1}{c}{\bfseries Bayes}&\multicolumn{1}{c}{\bfseries }&\multicolumn{4}{c}{\bfseries $R^{(\alpha)}$-posterior}&\multicolumn{1}{c}{\bfseries }&\multicolumn{4}{c}{\bfseries $\gamma$-posterior}\tabularnewline
\cline{4-4} \cline{6-9} \cline{11-14}
\multicolumn{1}{c}{$\varepsilon$}&\multicolumn{1}{c}{$n$}&\multicolumn{1}{c}{}&\multicolumn{1}{c}{$\alpha,\gamma \to 0.0$}&\multicolumn{1}{c}{}&\multicolumn{1}{c}{$\alpha=0.2$}&\multicolumn{1}{c}{$\alpha=0.3$}&\multicolumn{1}{c}{$\alpha=0.5$}&\multicolumn{1}{c}{$\alpha=0.7$}&\multicolumn{1}{c}{}&\multicolumn{1}{c}{$\gamma=0.2$}&\multicolumn{1}{c}{$\gamma=0.3$}&\multicolumn{1}{c}{$\gamma=0.5$}&\multicolumn{1}{c}{$\gamma=0.7$}\tabularnewline
\midrule
\multicolumn{14}{l}{\bfseries Uniform prior}\tabularnewline
$0.00$&$ 20$&&$0.033$&&$0.062$&$0.104$&$ 1.110$&$  8.455$&&$0.057$&$0.080$&$0.195$&$ 0.747$\tabularnewline
$0.00$&$ 50$&&$0.011$&&$0.015$&$0.019$&$ 0.034$&$  0.161$&&$0.015$&$0.017$&$0.025$&$ 0.036$\tabularnewline
$0.00$&$100$&&$0.005$&&$0.006$&$0.007$&$ 0.011$&$  0.018$&&$0.006$&$0.007$&$0.009$&$ 0.012$\tabularnewline
$0.05$&$ 20$&&$0.761$&&$0.424$&$0.471$&$ 7.528$&$ 37.358$&&$0.379$&$0.309$&$0.673$&$ 3.370$\tabularnewline
$0.05$&$ 50$&&$0.553$&&$0.095$&$0.051$&$ 0.066$&$  0.950$&&$0.087$&$0.040$&$0.035$&$ 0.047$\tabularnewline
$0.05$&$100$&&$0.482$&&$0.039$&$0.017$&$ 0.018$&$  0.031$&&$0.035$&$0.014$&$0.012$&$ 0.014$\tabularnewline
$0.20$&$ 20$&&$3.262$&&$4.185$&$5.830$&$55.081$&$138.264$&&$3.874$&$4.117$&$8.080$&$29.181$\tabularnewline
$0.20$&$ 50$&&$2.816$&&$2.706$&$2.229$&$ 1.895$&$ 27.513$&&$2.638$&$1.962$&$0.741$&$ 0.454$\tabularnewline
$0.20$&$100$&&$2.682$&&$2.405$&$1.704$&$ 0.483$&$  0.506$&&$2.372$&$1.526$&$0.146$&$ 0.038$\tabularnewline[10pt]
\multicolumn{14}{l}{\bfseries Reference prior}\tabularnewline
$0.00$&$ 20$&&$0.027$&&$0.030$&$0.033$&$ 0.040$&$  0.059$&&$0.030$&$0.032$&$0.041$&$ 0.058$\tabularnewline
$0.00$&$ 50$&&$0.010$&&$0.011$&$0.012$&$ 0.015$&$  0.017$&&$0.011$&$0.012$&$0.015$&$ 0.020$\tabularnewline
$0.00$&$100$&&$0.005$&&$0.006$&$0.006$&$ 0.007$&$  0.008$&&$0.006$&$0.006$&$0.007$&$ 0.009$\tabularnewline
$0.05$&$ 20$&&$0.611$&&$0.153$&$0.083$&$ 0.068$&$  0.101$&&$0.145$&$0.073$&$0.050$&$ 0.054$\tabularnewline
$0.05$&$ 50$&&$0.504$&&$0.054$&$0.023$&$ 0.019$&$  0.021$&&$0.051$&$0.021$&$0.017$&$ 0.021$\tabularnewline
$0.05$&$100$&&$0.459$&&$0.027$&$0.011$&$ 0.009$&$  0.010$&&$0.025$&$0.010$&$0.008$&$ 0.010$\tabularnewline
$0.20$&$ 20$&&$2.731$&&$2.283$&$1.624$&$ 0.941$&$  0.982$&&$2.232$&$1.482$&$0.548$&$ 0.304$\tabularnewline
$0.20$&$ 50$&&$2.633$&&$2.165$&$1.330$&$ 0.341$&$  0.215$&&$2.140$&$1.218$&$0.171$&$ 0.048$\tabularnewline
$0.20$&$100$&&$2.595$&&$2.158$&$1.268$&$ 0.144$&$  0.070$&&$2.143$&$1.144$&$0.046$&$ 0.014$\tabularnewline[10pt]
\multicolumn{14}{l}{\bfseries Moment Matching prior}\tabularnewline
$0.00$&$ 20$&&$0.026$&&$0.028$&$0.031$&$ 0.040$&$  0.063$&&$0.028$&$0.032$&$0.042$&$ 0.054$\tabularnewline
$0.00$&$ 50$&&$0.010$&&$0.011$&$0.012$&$ 0.015$&$  0.019$&&$0.011$&$0.012$&$0.015$&$ 0.020$\tabularnewline
$0.00$&$100$&&$0.005$&&$0.006$&$0.006$&$ 0.007$&$  0.009$&&$0.006$&$0.006$&$0.007$&$ 0.009$\tabularnewline
$0.05$&$ 20$&&$0.525$&&$0.105$&$0.058$&$ 0.046$&$  0.052$&&$0.104$&$0.057$&$0.048$&$ 0.056$\tabularnewline
$0.05$&$ 50$&&$0.470$&&$0.043$&$0.020$&$ 0.017$&$  0.018$&&$0.041$&$0.019$&$0.017$&$ 0.021$\tabularnewline
$0.05$&$100$&&$0.443$&&$0.023$&$0.010$&$ 0.008$&$  0.009$&&$0.022$&$0.009$&$0.008$&$ 0.010$\tabularnewline
$0.20$&$ 20$&&$2.411$&&$1.809$&$1.132$&$ 0.441$&$  0.186$&&$1.816$&$1.137$&$0.461$&$ 0.385$\tabularnewline
$0.20$&$ 50$&&$2.507$&&$1.974$&$1.120$&$ 0.222$&$  0.082$&&$1.971$&$1.065$&$0.153$&$ 0.054$\tabularnewline
$0.20$&$100$&&$2.532$&&$2.065$&$1.148$&$ 0.106$&$  0.040$&&$2.059$&$1.054$&$0.043$&$ 0.015$\tabularnewline
\bottomrule
\end{tabular}
}
\end{center}
\end{table}




Figure \ref{contami:fig} shows the results of empirical bias and MSE of posterior means of $\mu$ and $\sigma$ under the uniform, reference and moment matching priors when $\nu = 6$ (fixed) and the contamination ratio $\ep$ varies from $0.00$ to $0.30$. In all cases,we can find that the standard posterior means (i.e. cases $\alpha,\gamma=0$) do not work well. For estimation of mean parameter $\mu$, $R^{(\alpha)}$- and $\gamma$-posterior means seems to be reasonable for the value of $\ep$ between $0.0$ and $0.20$. In particular, $\gamma$-posterior means under reference and moment matching priors has better performance even if $\ep=0.30$. For estimation of variance parameter $\sigma$, $R^{(\alpha)}$-posterior means under uniform prior is larger bias and MSE than other methods. The $\gamma$-posterior mean with $\gamma=1.0$ still may be better than other competitors for any $\ep \in [0, 0.30]$. For $\alpha,\gamma=0.5$, $R^{(\alpha)}$- and $\gamma$-posterior means seems to be comparable. 

Figure \ref{mu:fig} also presents the results of empirical biase and MSE  of posterior means of $\mu$ and $\sigma$ under the same priors as Figure \ref{contami:fig} when the contamination ratio is $\ep = 0.20$ (fixed) and $\nu$ varies from $0.0$ to $10.0$. For estimation of mean parameter $\mu$ in Figure \ref{mu:fig}, empirical bias and MSE for robust estimators seems to be nice regardless of $\nu$ except for the case of $R^{(\alpha)}$-posterior under uniform prior. Although we can find that some differences appear near $\nu=4$, $\gamma$-posterior means with $\gamma=1.0$ have better performance for estimation of both mean $\mu$ and variance $\sigma$ for all $\nu \in [0,10]$. 

In these simulation studies, the $\gamma$-posterior mean under reference and moment matching priors seems to have better performance for joint estimation of $(\mu,\sigma)$ in most scenarios. Although we provide the results for the univariate normal distribution, the other distribution (including the multivariate distribution) should be also considered in the future.




\begin{figure}[!tb]
\centering
\includegraphics[width=15cm,clip]{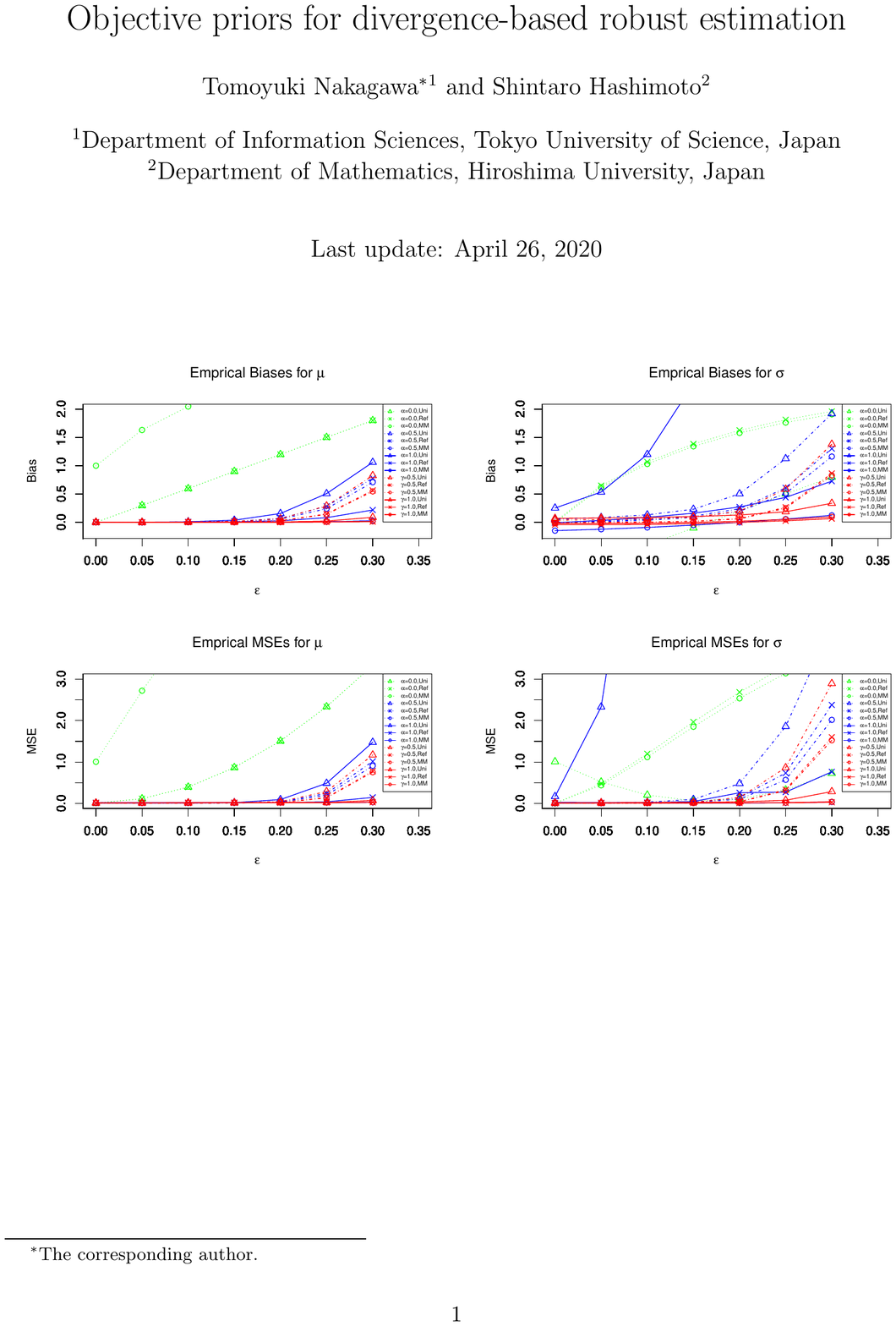}
\caption{The horizontal axis is the contamination ratio $\ep$. The red lines show empirical bias and MSE of $\gamma$-posterior means under the three priors when $n = 100$ and $\nu = 6$. Similarly, the blue and green lines show that of $R^{(\alpha)}$-posterior and ordinary posterior means, respectively. Uniform, reference, and moment matching priors are denoted by ``Uni", ``Ref", and ``MM", respectively.}
\label{contami:fig}
\end{figure}

\begin{figure}[!tb]
\centering
\includegraphics[width=15cm,clip]{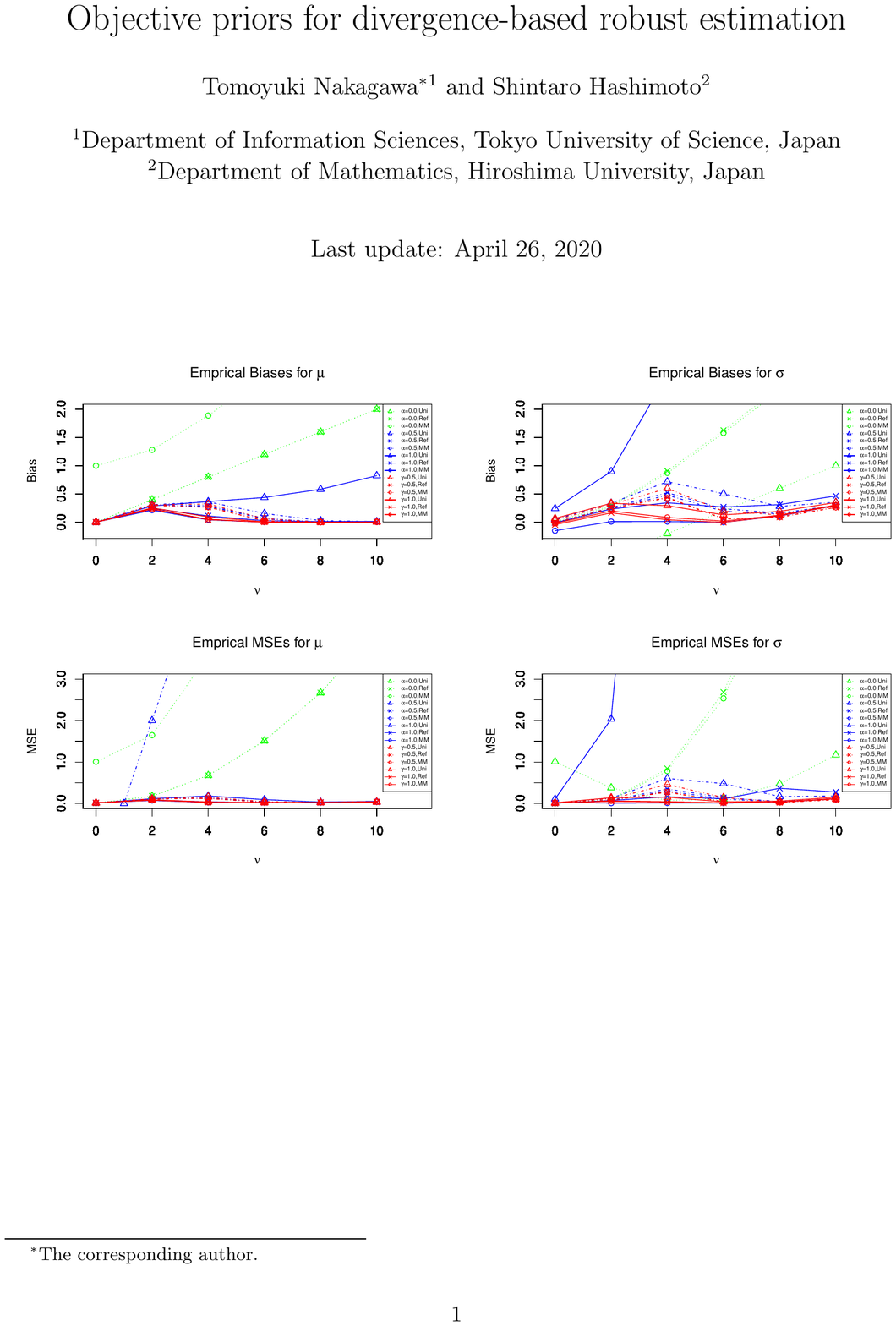}
\caption{The horizontal axis is the location parameter $\nu$ of contamination distribution. The red lines show empirical bias and MSE of $\gamma$-posterior means under the three priors when $n = 100$ and $\ep = 0.20$. Similarly, the blue and green lines show that of $R^{(\alpha)}$-posterior and ordinary posterior means, respectively. Uniform, reference, and moment matching priors are denoted by ``Uni", ``Ref", and ``MM", respectively. }
\label{mu:fig}
\end{figure}

\subsection{Selection of tuning parameters}
The selection of a tuning parameter $\gamma$ (or $\alpha$) is very challenging and, to the best of our knowledge, there is no optimal choice of $\gamma$. The tuning parameter $\gamma$ controls the degree of robustness, that is, if we set large $\gamma$, we obtain the higher robustness. However, there is a trade-off between robustness and efficiency of estimators. One of solutions for this problem is to use the asymptotic relative efficiency (ARE) (see e.g. \cite{ghosh2016robust}). It should be noted that \cite{ghosh2016robust} only deals with a one-parameter case. 
In general, the asymptotic relative efficiency of the robust posterior mean $\hat{\bm{\theta}}^{(\gamma)}$ of $p$-dimensional parameter $\bm{\theta}$ relative to the usual posterior mean $\hat{\bm{\theta}}$ is defined by
\begin{align*}
\mathrm{ARE}(\hat{\bm{\theta}}^{(\gamma)},\hat{\bm{\theta}}):=\left(\frac{\det\left(V(\bm{\theta})\right)}{\det\left(V^{(\gamma)}(\bm{\theta})\right)}\right)^{1/p}
\end{align*}
(see e.g. \cite{serfling1980book}). This is the ratio of determinants of the covariance matrices, raised to the power of $1/p$, where $p$ is the dimension of the parameter $\bm{\theta}$. We now calculate the $\mathrm{ARE}(\hat{\bm{\theta}}^{(\gamma)},\hat{\bm{\theta}})$ in our simulation setting. After some calculations, the asymptotic relative efficiency is given by
\begin{align*}
\mathrm{ARE}(\hat{\bm{\theta}}^{(\gamma)},\hat{\bm{\theta}})=\left(\frac{2}{(1+\gamma)^6(1+2\gamma)(2+4\gamma+3\gamma^2)} \right)^{1/2} =: h(\gamma)
\end{align*}
for $\gamma>0$. We note that it holds $h(\gamma)\to 1 $ as $\gamma\to 0$. Hence, we may be able to choose $\gamma$ to allow for the small inflation of the efficiency. For example, if we require the value of the asymptotic relative efficiency $\mathrm{ARE}=0.95$, we may choose the value of $\gamma$ as the solution of the equation $h(\gamma)=0.95$ (see Table \ref{ARE:tab}). The curve of the function $h(\gamma)$ is also given in Figure \ref{ARE:fig}. Several authors provide methods for the selection of the tuning parameters (e.g. \cite{Warwick2005}, \cite{sugasawa2020robust} and \cite{Basak2020}). \cite{fujisawa2008robust} focused on the reduction of the latent bias of estimator and they recommend to set $\gamma = 1$ for normal mean-variance estimation problem, but it seems to be unreasonable in terms of the asymptotic relative efficiency (see Table \ref{ARE:tab} and Figure \ref{ARE:fig}). To the best of our  knowledge, there is no methods which are robust and efficient under heavy contamination setting. Hence, other methods which have higher efficiency under heavy contamination should be considered in the future.

\begin{table}[!htb]

\caption{The value of $\gamma$ and the corresponding asymptotic relative efficiency}
\begin{center}
\begin{tabular}{|c|c|c|c|c|} \hline
$\gamma$ & $0.01$ & $0.1$ & $0.3$ & $0.5$ \\ \hline
ARE & 0.951489 & 0.6222189 & 0.2731871 & 0.1359501\\ \hline
\end{tabular}
\label{ARE:tab}
\end{center}
\end{table}

\begin{figure}[!htb]
\centering
\includegraphics[width=7cm,clip]{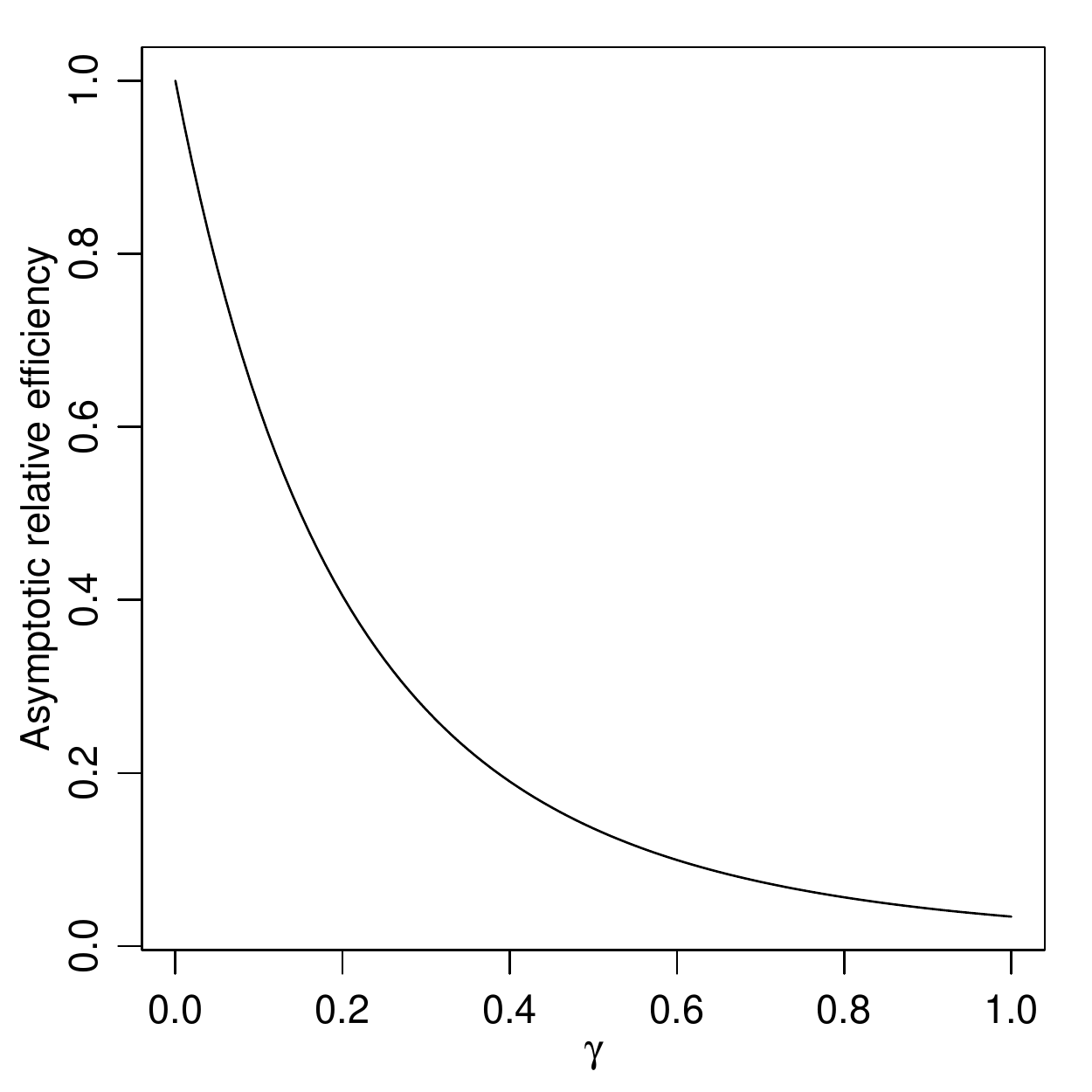}
\caption{The curve of the asymptotic relative efficiency for normal mean and variance estimation under the $\gamma$-posterior.}
\label{ARE:fig}
\end{figure}

\section{Concluding remarks}

We considered objective priors for divergence-based robust Bayesian estimation. In particular, we proved that reference and moment matching priors under quasi-posterior based on the $\gamma$-divergence are robust against unknown quantities in a data generating distribution. The performance of the corresponding posterior means is illustrated through some simulation studies. However, the proposed objective priors are often improper, and to show the posterior propriety for them remains as the future research. Our results should be extended to other settings. For example, Kanamori and Fujisawa \cite{kanamori2015robust} proposed the estimation of the contamination ratio using an unnormalized model. To examine such problem from the Bayesian perspective is also challenging because there is a problem of how to set a prior distribution for an unknown contamination ratio. Furthermore, it would be also interesting to consider an optimal data-dependent choice of tuning parameter $\gamma$.

\section*{Acknowledgments}
This work was supported by JSPS Grant-in-Aid for Early-Career Scientists Grant Number JP19K14597 and JSPS Grant-in-Aid for Young Scientists (B) Grant Number JP17K14233. 
\bibliographystyle{apalike} 
\bibliography{References.bib}

\begin{thebibliography}{}

\bibitem[Basak et~al., 2020]{Basak2020}
Basak, S., Basu, A., and Jones, M. (2020).
\newblock On the ‘optimal’density power divergence tuning parameter.
\newblock {\em Journal of Applied Statistics}, pages 1--21.

\bibitem[Basu et~al., 1998]{basu1998robust}
Basu, A., Harris, I.~R., Hjort, N.~L., and Jones, M. (1998).
\newblock Robust and efficient estimation by minimising a density power
  divergence.
\newblock {\em Biometrika}, 85(3):549--559.

\bibitem[Basu et~al., 2011]{Basu2011robust}
Basu, A., Shioya, H., and Park, C. (2011).
\newblock {\em Statistical Inference: The Minimum Distance Approach}.
\newblock Chapman \& Hall.

\bibitem[Bernardo, 1979]{bernardo1979reference}
Bernardo, J.~M. (1979).
\newblock Reference posterior distributions for bayesian inference.
\newblock {\em Journal of the Royal Statistical Society: Series B
  (Methodological)}, 41(2):113--128.

\bibitem[Bissiri et~al., 2016]{bissiri2016general}
Bissiri, P.~G., Holmes, C.~C., and Walker, S.~G. (2016).
\newblock A general framework for updating belief distributions.
\newblock {\em Journal of the Royal Statistical Society: Series B (Statistical
  Methodology)}, 78(5):1103--1130.

\bibitem[Fujisawa and Eguchi, 2008]{fujisawa2008robust}
Fujisawa, H. and Eguchi, S. (2008).
\newblock Robust parameter estimation with a small bias against heavy
  contamination.
\newblock {\em Journal of Multivariate Analysis}, 99(9):2053--2081.

\bibitem[Ghosh and Basu, 2016]{ghosh2016robust}
Ghosh, A. and Basu, A. (2016).
\newblock Robust bayes estimation using the density power divergence.
\newblock {\em Annals of the Institute of Statistical Mathematics},
  68(2):413--437.

\bibitem[Ghosh and Liu, 2011]{ghosh2011moment}
Ghosh, M. and Liu, R. (2011).
\newblock Moment matching priors.
\newblock {\em Sankhya A}, 73(2):185--201.

\bibitem[Ghosh et~al., 2011]{ghosh2011general}
Ghosh, M., Mergel, V., and Liu, R. (2011).
\newblock A general divergence criterion for prior selection.
\newblock {\em Annals of the Institute of Statistical Mathematics},
  63(1):43--58.

\bibitem[Giummol{\`e} et~al., 2019]{giummole2019objective}
Giummol{\`e}, F., Mameli, V., Ruli, E., and Ventura, L. (2019).
\newblock Objective bayesian inference with proper scoring rules.
\newblock {\em Test}, 28(3):728--755.

\bibitem[Hashimoto, 2019]{hashimoto2019moment}
Hashimoto, S. (2019).
\newblock Moment matching priors for non-regular models.
\newblock {\em Journal of Statistical Planning and Inference}, 203:169--177.

\bibitem[Hashimoto, 2021]{hashimoto2021}
Hashimoto, S. (2021).
\newblock Reference priors via $\alpha$-divergence for a certain non-regular
  model in the presence of a nuisance parameter.
\newblock {\em Journal of Statistical Planning and Inference}, 213:162--178.

\bibitem[Hashimoto and Sugasawa, 2020]{hashimoto2020gamma}
Hashimoto, S. and Sugasawa, S. (2020).
\newblock Robust bayesian regression with synthetic posterior distributions.
\newblock {\em Entropy}, 22:661.

\bibitem[Hirose et~al., 2016]{hirose2016gamma}
Hirose, K., Fujisawa, H., and Sese, J. (2016).
\newblock Robust sparse gaussian graphical modeling.
\newblock {\em Journal of Multivariate Analysis}, 161:172--190.

\bibitem[Hirose and Masuda, 2018]{hirose-masuda}
Hirose, K. and Masuda, H. (2018).
\newblock Robust relative error estimation.
\newblock {\em Entropy}, 20:632.

\bibitem[Hooker and Vidyashankar, 2014]{Hooker2014robust}
Hooker, G. and Vidyashankar, A.~N. (2014).
\newblock Bayesian model robustness via disparities.
\newblock {\em Test}, 23(3):556--584.

\bibitem[Huber and Ronchetti, 2009]{huber2009robust}
Huber, J. and Ronchetti, E.~M. (2009).
\newblock {\em Robust Statistics, Second Edition}.
\newblock Wiley.

\bibitem[Jewson et~al., 2018]{jewson2018robust}
Jewson, J., Smith, J.~Q., and Holmes, C. (2018).
\newblock Principles of bayesian inference using general divergence criteria.
\newblock {\em Entropy}, 20:442.

\bibitem[Jones et~al., 2001]{jones2001comparison}
Jones, M., Hjort, N.~L., Harris, I.~R., and Basu, A. (2001).
\newblock A comparison of related density-based minimum divergence estimators.
\newblock {\em Biometrika}, 88(3):865--873.

\bibitem[Kanamori and Fujisawa, 2015]{kanamori2015robust}
Kanamori, T. and Fujisawa, H. (2015).
\newblock Robust estimation under heavy contamination using unnormalized
  models.
\newblock {\em Biometrika}, 102(3):559--572.

\bibitem[Kanamori et~al., 2014]{kanamori2014affine}
Kanamori, T., Fujisawa, H., et~al. (2014).
\newblock Affine invariant divergences associated with proper composite scoring
  rules and their applications.
\newblock {\em Bernoulli}, 20(4):2278--2304.

\bibitem[Kawashima and Fujisawa, 2017]{kawashima2017robust}
Kawashima, T. and Fujisawa, H. (2017).
\newblock Robust and sparse regression via $\gamma$-divergence.
\newblock {\em Entropy}, 19(11):608.

\bibitem[Liu et~al., 2014]{liu2014divergence}
Liu, R., Chakrabarti, A., Samanta, T., Ghosh, J.~K., and Ghosh, M. (2014).
\newblock On divergence measures leading to jeffreys and other reference
  priors.
\newblock {\em Bayesian Analysis}, 9(2):331--370.

\bibitem[Nakagawa and Hashimoto, 2020]{nakagawa2019robust}
Nakagawa, T. and Hashimoto, S. (2020).
\newblock Robust bayesian inference via $\gamma$-divergence.
\newblock {\em Communications in Statistics-Theory and Methods},
  49(2):343--360.

\bibitem[Robert and Casella, 2004]{robert2004monte}
Robert, C.~P. and Casella, G. (2004).
\newblock {\em Monte Carlo Statistical Methods}.
\newblock Springer.

\bibitem[Serfling, 1980]{serfling1980book}
Serfling, R. (1980).
\newblock {\em Approximation Theorems of Mathematical Statistics}.
\newblock Wiley.

\bibitem[Sugasawa, 2020]{sugasawa2020robust}
Sugasawa, S. (2020).
\newblock Robust empirical bayes small area estimation with density power
  divergence.
\newblock {\em Biometrika}, 107(2):467--480.

\bibitem[Warwick and Jones, 2005]{Warwick2005}
Warwick, J. and Jones, M. (2005).
\newblock Choosing a robustness tuning parameter.
\newblock {\em Journal of Statistical Computation and Simulation},
  75(7):581--588.

\end{thebibliography}

\def\thesection{Appendix}
\def\thesubsection{A.\arabic{subsection}}
\section{}
\setcounter{equation}{0}
\def\theequation{A.\arabic{equation}}
\def\thelem{A.\arabic{lem}}
\allowdisplaybreaks[1]
\subsection{Some derivative functions}
We now put $\ell(x) = \log f_{\bm{\theta}}(x)$, $\ell_i(x) = \partial_i\log f_{\bm{\theta}}(x)$, $\ell_{ij}(x) = \partial_i\partial_j\log f_{\bm{\theta}}(x)$ and $\ell_{ijk}(x) = \partial_i\partial_j\partial_k\log f_{\bm{\theta}}(x)$, and let a norm $\|\cdot\|_p : L_p(\Omega) \to \mathbb{R}$ be defined by
\[
 \|h\|_{p} = \left(\int_{\Omega} |h(x)|^{p} \dd x\right)^{1/p}. 
\] 
We then obtain derivative functions of $q^{(\alpha)}(x_j ; \bm{\theta})$ with respect to $\bm{\theta}$ as follows:
\begin{align*}
\begin{split}
\partial_i q^{(\alpha)}(x; \bm{\theta})  =&  f_{\bm{\theta}}(x)^{\alpha} \ell_i(x)  - \int_{\Omega} f_{\bm{\theta}}(y)^{\alpha + 1} \ell_i(y)\dd y, \\
\partial_i \partial_j q^{(\alpha)}(x; \bm{\theta}) =& f_{\bm{\theta}}(x)^{\alpha} \left\{ \alpha\ell_i(x)\ell_j(x) + \ell_{ij}(x)\right\} \\
& - \int_{\Omega} f_{\bm{\theta}}(y)^{\alpha + 1}\left\{ (\alpha + 1)\ell_i(y)\ell_i(y)+ \ell_{ij}(y) \right\} \dd y, \\
\partial_i \partial_j\partial_k q^{(\alpha)}(x; \bm{\theta})  =& f_{\bm{\theta}}(x)^{\alpha} \left\{ \alpha^2\ell_i(x)\ell_j(x)\ell_k(x)+ \ell_{ijk}(x)\right.\\
&- \alpha\left(\ell_k(x)\ell_{ij}(x) + \ell_i(x)\ell_{jk}(x) \left. + \ell_j(x)\ell_{ik}(x)\right)  \right\}\\
& - \int_{\Omega} f_{\bm{\theta}}(y)^{\alpha + 1}\left\{ (\alpha + 1)^2\ell_i(y)\ell_j(y)\ell_k(y)+ \ell_{ijk}(y)  \right.\\
& \left. + (\alpha+ 1) \left\{\ell_k(y)\ell_{ij}(y)+ \ell_i(y)\ell_{jk}(y) + \ell_j(y)\ell_{ik}(y)\right\}\right\} \dd y.  \\
\end{split}
\end{align*}
Similarly, we obtain derivative functions of  $q^{(\gamma)}(x_j ; \bm{\theta})$ as follows: 
\begin{align*}
\begin{split}
\partial_i q^{(\gamma)}(x; \bm{\theta})  =& \frac{f_{\bm{\theta}}(x)^{\gamma}}{\|f_{\bm{\theta}}\|_{\gamma+1}^{\gamma}}\ell_i(x) - \frac{f_{\bm{\theta}}(x)^{\gamma}}{\|f_{\bm{\theta}}\|_{\gamma+1}^{1+2\gamma}}\int_{\Omega} f_{\bm{\theta}}(y)^{\gamma+1}\ell_i(y) \dd y, \\
\partial_i\partial_j q^{(\gamma)}(x_j; \bm{\theta})  =&  \frac{f_{\bm{\theta}}(x)^{\gamma}}{\|f_{\bm{\theta}}\|_{\gamma+1}^{\gamma}}\left(\gamma\ell_i(x)\ell_j(x) + \ell_{ij}(x)\right)\\
& - \gamma  \frac{f_{\bm{\theta}}(x)^{\gamma}}{\|f_{\bm{\theta}}\|_{\gamma+1}^{1+2\gamma}}\left(\ell_j(x)\int_{\Omega} f_{\bm{\theta}}(y)^{\gamma+1}\ell_i(y)\dd y +\ell_i(x)\int_{\Omega} f_{\bm{\theta}}(y)^{\gamma+1}\ell_j(y)\dd y \right)\\
& + (1+2\gamma) \frac{f_{\bm{\theta}}(x)^{\gamma}}{\|f_{\bm{\theta}}\|_{\gamma+1}^{2+3\gamma}}\int_{\Omega} f_{\bm{\theta}}(y)^{\gamma+1}\ell_i(y)\dd y\int_{\Omega} f_{\bm{\theta}}(y)^{\gamma+1}\ell_j(y)\dd y\\
& - \frac{f_{\bm{\theta}}(x)^{\gamma}}{\|f_{\bm{\theta}}\|_{\gamma+1}^{1+2\gamma}}\left(\int_{\Omega} f_{\bm{\theta}}(y)^{\gamma+1}s_{ij}(y)\dd y\right), \\
\partial_i\partial_j\partial_k q^{(\gamma)}(x; \bm{\theta})  =&  \frac{f_{\bm{\theta}}(x)^{\gamma}}{\|f_{\bm{\theta}}\|_{\gamma+1}^{\gamma}}\left(\gamma^2\ell_i(x)\ell_j(x)\ell_k(x) + \ell_{ijk}(x)\right)\\
& +\gamma \frac{f_{\bm{\theta}}(x)^{\gamma}}{\|f_{\bm{\theta}}\|_{\gamma+1}^{\gamma}}\left(\ell_{k}(x)\ell_{ij}(x) + \ell_j(x)\ell_{ik}(x) + \ell_i(x)\ell_{jk}(x)\right)\\
&  -  \frac{f_{\bm{\theta}}(x)^{\gamma}}{\|f_{\bm{\theta}}\|_{\gamma+1}^{1+2\gamma}}\left(\gamma^2\ell_j(x)\ell_k(x) + \gamma \ell_{jk}(x)\right)\int_{\Omega} f_{\bm{\theta}}(y)^{\gamma+1}\ell_i(y)\dd y\\
&   -  \frac{f_{\bm{\theta}}(x)^{\gamma}}{\|f_{\bm{\theta}}\|_{\gamma+1}^{1+2\gamma}}\left(\gamma^2\ell_i(x)\ell_k(x) + \gamma \ell_{ik}(x)\right)\int_{\Omega} f_{\bm{\theta}}(y)^{\gamma+1}\ell_j(y)\dd y\\
&   -  \frac{f_{\bm{\theta}}(x)^{\gamma}}{\|f_{\bm{\theta}}\|_{\gamma+1}^{1+2\gamma}}\left(\gamma^2\ell_i(x)\ell_j(x) + \gamma \ell_{ij}(x)\right)\int_{\Omega} f_{\bm{\theta}}(y)^{\gamma+1}\ell_k(y)\dd y\\
& + (1+\gamma)(1+2\gamma) \frac{f_{\bm{\theta}}(x)^{\gamma}}{\|f_{\bm{\theta}}\|_{\gamma+1}^{2+3\gamma}}\ell_k(x)\int_{\Omega} f_{\bm{\theta}}(y)^{\gamma+1}\ell_i(y)\dd y\int_{\Omega} f_{\bm{\theta}}(y)^{\gamma+1}\ell_j(y)\dd y\\
& + (1+\gamma)(1+2\gamma) \frac{f_{\bm{\theta}}(x)^{\gamma}}{\|f_{\bm{\theta}}\|_{\gamma+1}^{2+3\gamma}}\ell_j(x)\int_{\Omega} f_{\bm{\theta}}(y)^{\gamma+1}\ell_i(y)\dd y\int_{\Omega} f_{\bm{\theta}}(y)^{\gamma+1}\ell_k(y)\dd y\\
& + (1+\gamma)(1+2\gamma) \frac{f_{\bm{\theta}}(x)^{\gamma}}{\|f_{\bm{\theta}}\|_{\gamma+1}^{2+3\gamma}}\ell_i(x)\int_{\Omega} f_{\bm{\theta}}(y)^{\gamma+1}\ell_j(y)\dd y\int_{\Omega} f_{\bm{\theta}}(y)^{\gamma+1}\ell_k(y)\dd y\\
&- \gamma  \frac{f_{\bm{\theta}}(x)^{\gamma}}{\|f_{\bm{\theta}}\|_{\gamma+1}^{1+2\gamma}}\ell_{k}(x)\int_{\Omega} f_{\bm{\theta}}(y)^{\gamma+1}s_{ij}(y)\dd y\\
 & - \gamma  \frac{f_{\bm{\theta}}(x)^{\gamma}}{\|f_{\bm{\theta}}\|_{\gamma+1}^{1+2\gamma}}\ell_{j}(x)\int_{\Omega} f_{\bm{\theta}}(y)^{\gamma+1}s_{ik}(y)\dd y\\
 & - \gamma  \frac{f_{\bm{\theta}}(x)^{\gamma}}{\|f_{\bm{\theta}}\|_{\gamma+1}^{1+2\gamma}}\ell_{i}(x)\int_{\Omega} f_{\bm{\theta}}(y)^{\gamma+1}s_{jk}(y)\dd y\\
&\ -  \frac{(1+2\gamma)(2+3\gamma)}{\|f_{\bm{\theta}}\|_{\gamma+1}^{3+4\gamma}}f_{\bm{\theta}}(x)^{\gamma}S_{ijk} \\
& - \frac{f_{\bm{\theta}}(x)^{\gamma}}{\|f_{\bm{\theta}}\|_{\gamma+1}^{1+2\gamma}}\int_{\Omega} f_{\bm{\theta}}(y)^{\gamma+1}\left\{(\gamma + 1)^2\ell_i(y)\ell_j(y)\ell_k(y)+\ell_{ijk}(y) \right\}\dd y\\
& - \frac{f_{\bm{\theta}}(x)^{\gamma}}{\|f_{\bm{\theta}}\|_{\gamma+1}^{1+2\gamma}}\int_{\Omega} f_{\bm{\theta}}(y)^{\gamma+1}s_{ijk}(y)\dd y,   
\end{split}
\end{align*}
where 
\begin{align*}
s_{ij}(y) = &(\gamma+1)\ell_{i}(y)\ell_{j}(y) + \ell_{ij}(y), \\
s_{ijk}(y) =& (\gamma + 1)\{\ell_{ij}(y)\ell_k(y)+\ell_{j}(y)\ell_{ik}(y) + \ell_{i}(y)\ell_{jk}(y)\}, \\
S_{ijk} = &\int_{\Omega} f_{\bm{\theta}}(y)^{\gamma+1}\ell_i(y)\dd y\int_{\Omega} f_{\bm{\theta}}(y)^{\gamma+1}\ell_j(y)\dd y\int_{\Omega} f_{\bm{\theta}}(y)^{\gamma+1}\ell_k(y)\dd y. 
\end{align*}

\end{document}